\newsavebox{\theorembox}
\newsavebox{\lemmabox}
\newsavebox{\corollarybox}
\newsavebox{\propositionbox}
\newsavebox{\examplebox}
\newsavebox{\conjecturebox}
\newsavebox{\algbox}
\newsavebox{\qbox}
\newsavebox{\problembox}
\newsavebox{\definitionbox}
\newsavebox{\assumptionbox}
\newsavebox{\hypothesisbox}
\savebox{\theorembox}{\noindent\bf Theorem}
\savebox{\lemmabox}{\noindent\bf Lemma}
\savebox{\corollarybox}{\noindent\bf Corollary}
\savebox{\propositionbox}{\noindent\bf Proposition}
\savebox{\examplebox}{\noindent\bf Example}
\savebox{\conjecturebox}{\noindent\bf Conjecture}
\savebox{\algbox}{\noindent\bf Algorithm}
\savebox{\qbox}{\noindent\bf Question}
\savebox{\definitionbox}{\noindent\bf Definition}
\savebox{\problembox}{\noindent\bf Problem}
\savebox{\assumptionbox}{\noindent\bf Assumption}
\savebox{\hypothesisbox}{\noindent\bf Hypothesis}
\newtheorem{theorem}{\usebox{\theorembox}}
\newtheorem{lemma}[theorem]{\usebox{\lemmabox}}
\newtheorem{definition}{\usebox{\definitionbox}}
\theoremstyle{definition}
\newtheorem{example}[theorem]{Example}
\begin{document}

\title{Contagious Sets in Dense Graphs\footnote{A preliminary version of this paper was presented at the International Workshop on Combinatorial Algorithms (IWOCA) in 2015. The proceedings will be available at Springer.}}

\author{Daniel Freund\thanks{Center for Applied Mathematics, Cornell University, Ithaca, NY, USA.  Email: \texttt{df365@cornell.edu}. Supported in part by  U.S.\ Army Research Office grant W911NF-14-1-0477.}
\and
Matthias Poloczek\thanks{School of Operations Research and Information Engineering, Cornell University, Ithaca, NY, USA. Email:~\texttt{poloczek@cornell.edu}. Supported by the Alexander von Humboldt Foundation within the Feodor Lynen program, and in part by NSF grant CCF-1115256, and AFOSR grants FA9550-15-1-0038 and FA9550-12-1-0200.}
\and
Daniel Reichman\thanks{Department of Computer Science, Cornell University, Ithaca, NY, USA.  Email: \texttt{daniel.reichman@gmail.com}. Supported in part by NSF grants IIS-0911036 and CCF-1214844, AFOSR grant FA9550-08-1-0266, and ARO grant W911NF-14-1-0017.}}

\date{}

\maketitle

\begin{abstract}
We study the activation process in undirected graphs known as bootstrap percolation:
a vertex is active either if it belongs to a set of initially activated
vertices or if at some point it had at least~$r$ active neighbors, for a threshold~$r$ that is identical for all vertices.
A contagious set is a vertex set whose activation results with the entire
graph being active.
Let~$m(G,r)$ be the size of a smallest contagious set in a graph~$G$ on~$n$ vertices.

We examine density conditions that ensure~$m(G,r) = r$ for all~$r \geq 2$. 
With respect to the minimum degree, we prove that such a smallest possible contagious set is guaranteed to exist if and only if~$G$ has minimum degree at least~$\frac{k-1}{k}\cdot n$. 
Moreover, we study the speed with which the activation spreads and provide tight upper bounds on the number of rounds it takes until all nodes are activated in such a graph.

We also investigate what average degree asserts the existence of small contagious sets.
For $n \geq k \geq r$, we denote by $M(n,k,r)$ the maximum number of edges in an $n$-vertex graph $G$ satisfying $m(G,r)>k$.
We determine the precise value of $M(n,k,2)$ and $M(n,k,k)$, assuming that~$n$ is sufficiently large compared to~$k$. 
\end{abstract}
\newpage
\section{Introduction}
In this article we study the $r$-\emph{neighbor bootstrap percolation} process.
Here we are given an undirected graph $G=(V,E)$ and an integer~$r \geq 1$.
%
%
Every vertex is either \emph{active} or \emph{inactive}. We say a set $A$ of vertices is active if all vertices in $A$ are active. The vertices that are active initially are called \emph{seeds}, and the set of seeds is denoted by~$A_0$. If vertices become active thereafter we also refer to them as \emph{infected}.
A contagious process evolves in discrete rounds. The set of active vertices in round~$i>0$ is
$$A_i=A_{i-1}\cup \{v:|N(v)\cap A_{i-1}|\geq r\},$$
where $N(v)$ is the set of neighbors of $v$. That is, a vertex becomes active irrevocably in a given round if it has at least~$r$ active neighbors. We refer to~$r$ as the \emph{threshold}.
Let~$\langle A_0 \rangle$ be the set of nodes that will eventually become infected if we activate~$A_0$.
\begin{definition}
	Given $G=(V,E)$, a set $A_0\subseteq V$ is called \emph{contagious} if $\langle A_0 \rangle=V$. In words, activating $A_0$ results in the infection of the entire vertex set.
	The size of the smallest contagious set is denoted by $m(G,r)$. For a contagious set $A_0$, the number of rounds until total infection is the smallest integer~$t$ with $A_t = V$.
\end{definition}

The term bootstrap percolation is used sometimes to model the case where the seeds are chosen independently at random. In this work we use this term also with respect to the deterministic selection of a contagious set.
Bootstrap percolation was first studied by statistical physicists \cite{Chal}.
Since then, this model has found applications in many fields. For example, this model is related to ``word of mouth'' effects occurring in viral marketing, where the information is only revealed to a small group of persons initially, who subsequently share it with their friends resulting with a cascade that may spread to the entire network. 
Similarly, we can think of cascading effects in finance, where an institute might default if a certain number of business partners fail (cp.\ \cite{KempeKT15,amini2013resilience,NNUW13,chen09} and the references therein).
Furthermore, various questions related to bootstrap percolation have been examined for a large variety of graphs including hypercubes \cite{BB}, grids \cite{Lattice,square}, several models of random graphs \cite{JanLuc,AF,BP}, and expanders \cite{Expanders}.

A natural question is to determine for a given integer~$k$, what combinatorial properties of graphs ensure that the minimum size of a contagious set is at most~$k$. Such a characterization seems difficult even for $k=2$ (and $r=2$). Indeed the family of all graphs with a contagious set of size two include, for example, cliques, bipartite cliques (with both sides larger than one), and binomial random graphs with edge probability $p \geq n^{-1/2+\epsilon}$ \cite{JanLuc}.

Previous works have examined the connection between~$m(G,r)$ and the degree sequence of~$G$ \cite{Ackerman,Reichman}.
Here we continue this line of investigation and study two basic (and interrelated) graph parameters: the minimum degree and edge cardinality.
More concretely, our goal is to determine what conditions on these parameters imply that $m(G,r)=k$ where $k$ is small compared to the number of vertices in $G$, and $r \leq k$. We study the cases $r=k$ and~$r=2$.

How large does the minimum degree have to be in order to guarantee a contagious set of size~$k \geq 2$, if all thresholds are~$k$?
%
%
Clearly, such a contagious set has minimum cardinality if it exists.

We prove that~$\left\lceil\frac{k-1}{k} \cdot n \right\rceil$ suffices, where~$n$ is the number of vertices. In particular, if~$k=2$ then the required minimum degree is~$\left\lceil \frac{n}{2} \right\rceil$.
A graph with this property is called Dirac graph.
We also show that this condition on the minimum degree is the best possible.
For~$k=2$ this is easy to see: if we lower the minimum degree to~$\left\lceil \frac{n}{2}-1 \right\rceil$, then~$G$ may be disconnected implying that $m(G,2)>2$ (provided that~$G$ has at least three vertices).
Graphs with minimum degree~$n/2$ are called Dirac graphs. In the Appendix we demonstrate that a contagious set of size 2 also exists in a generalization of Dirac graphs known as Ore graphs 
%
%
(Dirac graphs and Ore graphs are known to have a Hamiltonian cycle).

While the minimum size of contagious sets has been studied thoroughly, much less is known on the number of rounds that the activation process takes to infect the whole graph (see~\cite{time,Bolo1,Bolo2} for instance).
%
%
We study the speed of spreading (also referred to as maximal percolation time) and show for graphs with minimum degree at least~$\frac{k-1}{k} \cdot n$ that any contagious set activates the whole graph within four rounds if~$k=2$. In the case of~$k\geq 3$ even three rounds suffice. We provide instances that demonstrate that these upper bounds are tight.

\medskip

A classic question in graph theory is to determine the minimum number of edges in an $n$-vertex graph $G$ that ensures that $G$ possesses a monotone graph property. This article examines extremal questions related to the existence of small contagious sets.
\begin{definition}
	Given integers $n \geq k \geq r$, we denote by $M(n,k,r)$ the maximum number of edges in an~$n$ vertex graph $G$, where~$G$ satisfies~$m(G,r)>k$.
\end{definition}
First we study~$M(n,k,k)$: a necessary condition for a graph $G=(V,E)$ of $n>k$ vertices to satisfy $m(G,k)=k$ is that $G$ is \emph{connected}. Here we show that the minimum number of edges that guarantees connectivity is also sufficient to ensure~$m(G,k)=k$ for~$n \geq 2k+2$, i.e.\ we have~$M(n,k,k) = \binom{n-1}{2}$.
Next we consider the case where~$r=2$. For $k \ll n$, we prove that $M(n,k,2)=\binom{n-k+1}{2}+\lfloor \frac{k+1}{2}\rfloor-1$ holds.


\subsubsection*{Preliminaries}
All graphs are undirected. Given a graph $G=(V,E)$, we will always assume it has~$n$ vertices. The degree of a node $v \in V$ is denoted by $\mathrm{deg}(v)$. The set of all neighbors of a vertex $v$ are denoted by $N(v)$.
For a set~$S \subseteq V$ we shorthand~$\overline{S} := V \setminus S$.
Given two disjoint sets $A,B$ of vertices, $E(A,B)$ is the number of edges with one endpoint in~$A$ and one endpoint in~$B$, and~$E(A)$ is the number of edges with both endpoints in~$A$.

\subsubsection*{The Structure of the Article}
In Sect.~\ref{section_mindeg_arbitrary_thresholds} we show that in order to guarantee the existence of a contagious set of size~$k$ for thresholds~$k$ a minimum degree of at least~$\frac{k-1}{k} \cdot n$ is required.
Then we examine in Sect.~\ref{sec:speed} how fast the activation spreads in such graphs.
Our extremal results are given in Sect.~\ref{sec:extremal}: we provide tight bounds on the number of edges required to have a contagious set of size~$k$ if all thresholds are~$k$ or~$2$.
In Sect.~\ref{sec:ore} we examine Ore graphs as a natural generalization of Dirac graphs and prove the existence of a contagious set of size two in Sect.~\ref{section_proof_ore_graphs}.

\section{The Existence of Contagious Sets in Graphs of Bounded Minimum Degree}
\label{section_mindeg_arbitrary_thresholds}
We study what minimum degree guarantees the existence of a contagious set of size~$k$ if all thresholds are equal to~$k$.
Note that for~$k=1$ connectivity is required to ensure the existence of a contagious set of size one; connectivity is only guaranteed for a minimum degree of (at least)~$\frac{n}{2}$.

In the following we show that for an arbitrary~$k \geq 2$ a minimum degree of at least~$\frac{k-1}{k} \cdot n$ suffices. 
%
%
Example~\ref{example_generalized_mindegree_tight} at the end of this section demonstrates that a minimum degree of~$\left\lfloor \frac{k-1}{k} \cdot n \right\rfloor$ does not suffice for any even~$n$. 
Thus, our condition on the minimum degree is necessary and sufficient.
We show the following.
\begin{theorem}
	\label{theorem_generalization_mindegree_arbitrary_thresholds}
	For~$k \geq 2$ let~$G = (V,E)$ be a undirected graph on~$n$ nodes with minimum degree~$\left\lceil \frac{k-1}{k} \cdot n \right\rceil$. Then~$G$ has a contagious set of size~$k$.
\end{theorem}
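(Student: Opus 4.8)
The plan is to build the contagious set greedily/probabilistically and exploit the fact that each vertex misses at most~$\frac{n}{k}$ other vertices. The key observation is: if $S$ is the current set of active vertices and $v \notin \langle S \rangle$, then $v$ together with all vertices it fails to activate forms a ``stuck'' set in which every vertex has fewer than $k$ neighbors inside~$\langle S\rangle$; but since each vertex has at most $n/k - 1$ non-neighbors, a stuck set of active vertices cannot be too small relative to what remains. Concretely, I would first prove the following closure lemma: \emph{if $A_0$ has size $k$ and $\langle A_0\rangle \neq V$, then the set $U = V \setminus \langle A_0\rangle$ of never-infected vertices satisfies that every $u \in U$ has at least $|U| - ?$ neighbors in $U$ is false} — rather, every $u\in U$ has at most $k-1$ neighbors in $\langle A_0 \rangle$, hence at least $\mathrm{deg}(u) - (k-1) \ge \frac{k-1}{k}n - (k-1)$ neighbors inside $U$, which forces $|U| \ge \frac{k-1}{k}n - (k-2)$, i.e.\ $U$ is \emph{large} (nearly all of~$V$). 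Symmetrically, $\langle A_0\rangle$ is small.

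Second, I would choose $A_0$ cleverly so that this large-$U$ scenario is impossible. The natural candidate: pick $k$ vertices $v_1,\dots,v_k$ that have a common neighbor — in fact pick them so that \emph{many} vertices are adjacent to all of them. By inclusion–exclusion, the number of common non-neighbors of any $k$ chosen vertices is at most $k\cdot(\frac{n}{k}-1) < n$, so there is always a vertex adjacent to all $k$ of them; more strongly, the set $W$ of common neighbors of $\{v_1,\dots,v_k\}$ has size at least $n - k(\frac{n}{k} ) + (\text{stuff}) $ — roughly, $|W| \ge n - \sum (\text{non-neighbors})$, and even the crude bound gives $|W|$ bounded below by a constant fraction issue; I'd want $|W|$ to be large enough that after $W$ gets infected in round~1, the infected set is already big enough to contradict the ``$\langle A_0\rangle$ is small'' conclusion from the closure lemma. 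So the argument is: in round~1, all of $W$ activates (each has $\ge k$ active neighbors, namely all of $A_0$); then $\langle A_0\rangle \supseteq A_0 \cup W$; if this already forces $\langle A_0 \rangle > $ (the small upper bound), we are done, otherwise we iterate one more round.

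Third — and I expect this to be the main obstacle — one must show the iteration actually closes up to all of $V$, i.e.\ rule out a proper stuck set $U$. The cleanest route is a counting/extremal argument on $U$ itself: if $|U| = s$, then $E(U, \langle A_0\rangle) \le (k-1)s$ since each $u\in U$ has $<k$ neighbors in the complement, but also each vertex of $\langle A_0\rangle$ — in particular the vertices of $W$, which number $\ge n - s - (\text{const})$ — has at most $\frac{n}{k}-1$ non-neighbors total, hence at least $s - (\frac{n}{k}-1)$ neighbors in $U$, so $E(U,\langle A_0\rangle) \ge |W|\cdot(s - \frac{n}{k}+1) \ge (n-s-c)(s-\frac{n}{k}+1)$. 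Combining $(k-1)s \ge (n-s-c)(s - \frac nk + 1)$ yields a contradiction unless $s$ lies in a narrow range near $0$ or near $n$; the ``near $n$'' branch (almost nothing gets infected) is killed by the first two rounds (we showed $|\langle A_0\rangle| \ge |W|$ is large), and the ``near $0$'' branch is killed by re-running the closure lemma, which said $|U|$ is either $0$ or large. Carefully choosing $A_0$ to maximize $|W|$ (e.g.\ by a pigeonhole/averaging argument over all $k$-subsets, since the average number of common neighbors of a random $k$-set is at least $n(\frac{k-1}{k})^k$, which is a constant fraction of $n$) makes the ranges incompatible, completing the proof. The delicate point throughout is tracking the additive constants (the $+1$'s, the $k-1$'s, and the ceiling in $\lceil\frac{k-1}{k}n\rceil$) so that the inequalities are strict; I would handle small $n$ (relative to $k$) separately by an ad hoc argument, since the asymptotic counting only bites once $n$ is a suitable multiple of~$k$.
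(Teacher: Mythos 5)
Your step~1 closure lemma and step~3 double count are the right ingredients --- they mirror the paper's core argument --- but there are two concrete gaps. In step~3 you should bound $E(U,\langle A_0\rangle)$ from below using \emph{all} of $\langle A_0\rangle$, which has size exactly $n-s$, rather than the subset $W$: the claim $|W|\ge n-s-c$ is unjustified ($W$ is fixed once $A_0$ is, and the closure can gain far more vertices than $W$ in later rounds, so $|W|$ can be much smaller than $n-s$). With $n-s$ in place, the double count shows the process can stall only at $|\langle A_0\rangle|=k$ exactly or $|\langle A_0\rangle|\ge\frac{k-1}{k}n$, and your closure lemma (which forces $|\langle A_0\rangle|\le\frac nk + k - 2$) kills the second branch once $k\ge 3$ and $n>k$. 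So the only requirement on $A_0$ is that $|A_1|\ge k+1$ --- i.e.\ that $A_0$ has at least one common neighbor, which the minimum degree already guarantees. The chase for a large $W$ is both unnecessary and, for small $k$, doomed: your averaging yields $|W|\approx n\bigl(\frac{k-1}{k}\bigr)^k$, and the comparison you would need, $\bigl(\frac{k-1}{k}\bigr)^k>\frac1k$, fails at $k=2$ ($\frac14<\frac12$) and at $k=3$ ($\frac{8}{27}<\frac{9}{27}$).

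The more serious hole is $k=2$, where the closure lemma and the double count meet exactly at $|\langle A_0\rangle|=\frac n2$, a value that is actually attained: take two $\frac n2$-cliques joined by a perfect matching. In that graph the pair with the largest common neighborhood is two adjacent vertices inside one clique, with $|W|=\frac n2-2$, and that pair is \emph{not} contagious --- its closure is just the one clique. A contagious pair does exist (one vertex in each clique, chosen so they are not matched), but it has only two common neighbors, so a heuristic that maximizes $|W|$ is pointed directly away from it. This $k=2$ extremal graph has to be recognized and handled as a separate structural case, which is precisely what the paper does.
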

\begin{proof}
	Assume for the sake of a contradiction that the graph has no contagious set of size~$k$. Then for every set~$S \subset V$ with~$|S| = k$ there is a set~$T$ with~$S \subseteq T \subset V$ such that the infection does not spread from the nodes in~$T$ to any node in~$\overline{T} = V \setminus T$. 
	Thus, it must be the case that every node in~$\overline{T}$ has at most~$k-1$ neighbors in~$T$.
	
	On the other hand, each node in~$T$ has at least~$\frac{k-1}{k} \cdot n - (|T| - 1)$ neighbors in~$\overline{T}$ due to the degree constraints.
	Thus, we will derive the desired contradiction (and may conclude that there must be a node in~$\overline{T}$ that becomes infected) if
	\begin{equation*}
		\left(\frac{k-1}{k} \cdot n - (|T| - 1)\right) \cdot |T| > (n - |T|) \cdot (k-1).
	\end{equation*}
	The next lemma narrows down the number of cases when the infection spreading may stop.
	\begin{lemma}
		\label{lemma_arbitrary_thresholds_stop}
		\label{lem:three} 
		For~$n \geq |T| \geq k \geq 2$ the infection does not stop unless~$|T| = k$ or~$|T| \geq \frac{k-1}{k} \cdot n$.
	\end{lemma}
	\begin{proof}
		%
		First observe that
		\begin{align*}
			\left(\frac{k-1}{k} \cdot n - (|T| - 1)\right) \cdot |T| & > (n - |T|) \cdot (k-1)\\
			\intertext{holds if and only if}
			(k - |T|) \cdot \left(k\cdot |T| - (k-1) \cdot n\right) & > 0.
		\end{align*}
		Observe that the LHS is a quadratic function in~$|T|$,
		%
		with roots~$|T| = k$ and~$|T| = \frac{k-1}{k} \cdot n$.
		In particular, this function is non-positive in the case~$|T| \geq k$ only if~$|T| \geq \frac{k-1}{k} \cdot n$. 
	\end{proof}
	%
	%
	Therefore, it suffices to show that the infection process neither stops in the very first round when~$|T| = k$, nor late when the number of active nodes is already~$|T| \geq \frac{k-1}{k} \cdot n$.
	The following lemma will essentially settle the case~$|T| = k$.
	\begin{lemma}
		\label{lemma_arbitrary_thresholds_neighbors}
		Let~$n > k \geq 2$.
		For minimum degree~$\left\lceil \frac{k-1}{k} \cdot n \right\rceil$ every node has at least~$k$ neighbors.
	\end{lemma}
	\begin{proof}
		We begin by observing that
		\begin{align*}
			\frac{k-1}{k}\cdot n & \geq k\\
			\intertext{holds if and only if}
			n & \geq k+1 + 1/(k-1),
		\end{align*}
		hence for~$k \leq n-2$ every node has at least~$k$ neighbors as claimed.
		Therefore, it remains to examine the case~$k = n-1$:
		then by the integrality of degrees we have that
		\begin{align*}
			\left\lceil\frac{n-2}{n-1}\cdot n\right\rceil & = \left\lceil\frac{n-2}{n-1}\cdot (n-1+1)\right\rceil\\
			& = n-2 + \left\lceil\frac{n-2}{n-1}\right\rceil,
		\end{align*}
		which equals~$n-1$ for~$n \geq 3$, as required; recall that~$n > k \geq 2$ holds.
		%
	\end{proof}
	Thus, we can always pick a node and activate $k$ of its neighbors to ensure that the process does not stop in the first step.
	We assume~$|T| \geq k+1$ from now on.
	
	Next we show that the infection process does not come to a standstill before all nodes are infected. 
	We distinguish the cases~$k=2$ and~$k \geq 3$.

	\paragraph*{The case~$k=2$.}
	Recall that in this case the minimum degree is at least~$\frac{n}{2}$, i.e.\ we consider Dirac graphs.
	We show the following simple, yet very helpful statement.
	\begin{lemma}
		\label{lemma_Dirac_almost_there}
		Let~$G = (V,E)$ be a Dirac graph and assume that every node has threshold two. If more than~$\frac{n}{2}$ nodes are active, then in the next round all remaining nodes become infected.
	\end{lemma}
	\begin{proof}
		Let~$A \subset V$ with~$|A| > \frac{n}{2}$ denote a set of active, resp.\ infected nodes. Then every node in~$\overline{A}$ can have at most~$|\overline{A}| - 1 \leq \left\lceil\frac{n}{2}\right\rceil - 2$ neighbors outside~$A$. Thus, it must have two neighbors in~$A$, since its degree is at least~$\left\lceil\frac{n}{2}\right\rceil$.
		%
	\end{proof}
	Lemma~\ref{lemma_arbitrary_thresholds_stop} gives that the number of active nodes is at least~$|T| \geq \frac{k-1}{k} \cdot n = \frac{n}{2}$ in this case, before the process may stop.
	Note that Lemma~\ref{lemma_Dirac_almost_there} guarantees that all nodes will become active once~$|T| > \frac{n}{2}$, therefore we wonder if the process can stop at~$|T| = \frac{n}{2}$?
	
	Interestingly, this is indeed the case. In order to make the process stop for threshold two, each of the~$\frac{n}{2}$ inactive nodes must have at most one active neighbor. Taking into account also the degree constraints, we observe that each inactive node has \emph{exactly} one active and~$\frac{n}{2} - 1$ inactive neighbors. For the active nodes it is the other way around.
	Therefore, the active and the inactive nodes form a clique of size~$\frac{n}{2}$ each, and both cliques are connected via a perfect matching.
	Note that the described scenario requires~$n$ to be even.
	
	We call this graph~$DC_n$, and observe that~$DC_n$ has a contagious set of size two (for thresholds two): simply activate one node in each clique such that the two activated nodes are not adjacent.
	Then the whole graph will become active in two rounds.
	
	Moreover, we have argued that if the graph is not~$DC_n$, then the process does not stop at~$|T| = \frac{n}{2}$ and will eventually activate all nodes.
	This proves the theorem for~$k=2$.

	\paragraph*{The case $k \geq 3$.}
	Recall that we may assume due to Lemma~\ref{lemma_arbitrary_thresholds_stop} that~$|T| \geq \frac{k-1}{k} \cdot n$ holds when the process stops. 
	%
	%
	We show a statement similar to Lemma~\ref{lemma_Dirac_almost_there}: the difference is that here we require \emph{at least}~$\frac{k-1}{k} \cdot n$ active nodes, whereas Lemma~\ref{lemma_Dirac_almost_there} required \emph{more than}~$\frac{n}{2}$ ones. 
	This difference will become important in Sect.~\ref{sec:speed}.
	\begin{lemma}
		\label{lemma_arbitrary_thresholds_almost_there}
		Assume that all thresholds are~$k$ and each node has minimum degree at least~$\frac{k-1}{k} \cdot n$.
		If at least~$\frac{k-1}{k} \cdot n$ nodes are active, then in the next round all nodes will become active.
	\end{lemma}
	\begin{proof}
		We have at most~$\left\lceil \frac{k-1}{k} \cdot n \right\rceil = \left\lfloor n - \frac{k-1}{k} \cdot n\right\rfloor = \left\lfloor \frac{n}{k}\right\rfloor$ inactive nodes. 
		Due to the degree constraints, the number of active neighbors of each inactive node is at least
		\begin{align*}
			\left\lceil\frac{k-1}{k}\cdot n\right\rceil - \left(\left\lfloor \frac{n}{k}\right\rfloor - 1\right) & = \left\lceil\frac{k-1}{k}\cdot n\right\rceil - \left(n - \left\lceil\frac{k-1}{k}\cdot n\right\rceil - 1\right)\\
			& = 2 \cdot \left\lceil\frac{k-1}{k}\cdot n\right\rceil -n +1.
		\end{align*}
		It suffices to show that this number is at least~$k$, since then every inactive node will be activated in the subsequent round.
		Therefore, we will show
		\begin{equation}
			\label{Iq_arb_thresholds}
			2 \cdot \left\lceil\frac{k-1}{k}\cdot n\right\rceil -n +1 \geq k. 
		\end{equation}
		Note that Eq.~(\ref{Iq_arb_thresholds}) is implied by
		\begin{align}
			2 \cdot \frac{k-1}{k}\cdot n -n +1 & \geq k \nonumber\\
			\intertext{or equivalenty,}
			(k-2) \cdot n & \geq k \cdot (k-1). \label{Eq_generalization_sufficient_neighbors}
		\end{align}
		Observe that Eq.~(\ref{Eq_generalization_sufficient_neighbors}) holds if~$n \geq k+2$ and~$k\geq 4$.
		Moreover, if~$k=3$ then Eq.~(\ref{Eq_generalization_sufficient_neighbors}) holds for~$n \geq k+3$. However, if~$k=3$ and~$n = k+2 = 5$, then~$\left\lceil\frac{k-1}{k}\cdot n\right\rceil = 4$ and the statement of the lemma holds trivially.
		
		Therefore, we focus on the case~$n = k+1$ and~$k \geq 3$.
		Then
		\begin{align*}
			2 \cdot \left\lceil\frac{k-1}{k}\cdot n\right\rceil -n +1 & = 2 \cdot \left\lceil\frac{(k-1)\cdot (k+1)}{k}\right\rceil -(k+1) +1\\
			& = 2 \cdot \left\lceil k - \frac{1}{k}\right\rceil  - k
		\end{align*}
		is at least~$k$ as desired, and inequality~(\ref{Iq_arb_thresholds}) holds for all~$n$ and~$k$ with~$n > k \geq 3$.
	\end{proof}
	Thus, the process does not stop if~$|T| \geq \frac{k-1}{k} \cdot n$ for~$k\geq 3$, and in particular all nodes will be active in the next round.
	The statement of the theorem follows since we already showed that it holds for~$k=2$.
	%
\end{proof}
We point out that in order to achieve an activation of the whole graph, it is sufficient to pick an arbitrary node and activate~$k$ of its neighbors (that must exist due to Lemma~\ref{lemma_arbitrary_thresholds_neighbors}).
Note that in particular this gives a contagious set of size two for Dirac graphs.

Next we show that our requirement on the minimum degree is optimal. Indeed, if lowering it to~$\left\lfloor \frac{k-1}{k} \cdot n \right\rfloor$, then the existence of a contagious set of size~$k$ is not guaranteed.
\begin{example}
	\label{example_generalized_mindegree_tight}
	For a clique on~$n$ nodes with~$n$ even, we pick an arbitrary perfect matching~$M$ and remove the edges of~$M$. Call the resulting graph~$G$ and let~$k = n-1$.
	Now we recall from the proof of Lemma~\ref{lemma_arbitrary_thresholds_neighbors} that
	\begin{align*}
		\left\lfloor \frac{k-1}{k} \cdot n \right\rfloor & = \left\lfloor \frac{n-2}{n-1} \cdot n \right\rfloor\\
		& = n-2,
	\end{align*}
	which equals the degree of the nodes in~$G$. Thus, the constraint on the minimum degree is met.
	
	Now observe that for thresholds~$k = n-1$ no inactive node can be activated by its neighbors, thus there is no contagious set of size~$k$ in~$G$.
	%
\end{example}

\section{The Speed of Spreading}
\label{sec:speed}
%
In Theorem~\ref{theorem_generalization_mindegree_arbitrary_thresholds} we proved for every~$k\geq 2$ that all graphs with minimum degree at least~$\frac{k-1}{k} \cdot n$ and thresholds~$k$ have a contagious set of size~$k$.

In this section we examine the question how many rounds the process requires to infect all nodes.
In the case of $DC_n$ and thresholds two, it is easy to see that any contagious set actually infects the entire graph in just two rounds. 
For graphs with minimum degree at least~$\frac{k-1}{k} \cdot n$ we will see that any contagious set of such a graph will infect all nodes within \emph{three} rounds if~$k \geq 3$.
In the case that~$k=2$, the process may take one more round, i.e.\ we show that any contagious set will activate the whole graph after at most \emph{four} rounds.
At the end of the section we give examples that prove these bounds tight.

In quick passing we point out that for~$k=1$ we required a minimum degree of at least~$\frac{n}{2}$ to ensure the existence of a contagious set of size one. Then we have at least~$\frac{n}{2} +1$ active nodes after the first round; hence every inactive node must have an active neighbor and thus will be infected after the second round.
Observe that this is tight for~$DC_n$ with~$n \geq 4$.
Now we show:
%
%
\begin{theorem}
	\label{theorem_speed_generalized}
	Let~$k \geq 2$ and denote by~$G = (V,E)$ a graph with minimum degree at least~$\frac{k-1}{k} \cdot n$ and all thresholds equal to~$k$. Then every contagious set activates the whole graph in at most three rounds if~$k \geq 3$, and in at most four rounds if~$k = 2$.
\end{theorem}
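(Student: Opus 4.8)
The plan is to track how quickly the active set grows, starting from an arbitrary contagious set $A_0$ of size $k$, and to combine this with the ``finishing'' lemmas already proved. The key observation is that Lemma~\ref{lemma_arbitrary_thresholds_almost_there} (for $k \geq 3$) and Lemma~\ref{lemma_Dirac_almost_there} (for $k=2$) guarantee that \emph{once} the active set reaches size $\tfrac{k-1}{k}\cdot n$ (resp.\ exceeds $\tfrac n2$), the entire graph is infected in one more round. So it suffices to bound the number of rounds needed to reach that threshold. First I would recall the argument behind Lemma~\ref{lemma_arbitrary_thresholds_stop}: as long as $k < |A_i| < \tfrac{k-1}{k}\cdot n$, the inequality $\bigl(\tfrac{k-1}{k}n - (|A_i|-1)\bigr)\cdot |A_i| > (n-|A_i|)(k-1)$ holds strictly, which means the set of active nodes must \emph{strictly grow} in the next round — it cannot stabilize at any intermediate size. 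This gives monotone progress, but by itself only yields a round bound of order $n$, which is far too weak; the real work is to show the jump from $k$ to $\tfrac{k-1}{k}n$ happens in essentially one round.

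For $k \geq 3$: I would show that after the \emph{first} round the active set already has size at least $\tfrac{k-1}{k}\cdot n$, so that round two finishes the graph by Lemma~\ref{lemma_arbitrary_thresholds_almost_there}, and round three is slack (or: one extra round is needed to go from $|A_0|=k$ to a configuration that triggers the first big jump). Concretely, pick the contagious set as in the remark following Theorem~\ref{theorem_generalization_mindegree_arbitrary_thresholds}: $k$ neighbors of a single vertex $v$. After round~$1$, $v$ is active, and more importantly every vertex with $\geq k$ neighbors among those $k+1$ active nodes is active. The cleaner route is a counting argument: if after round~$1$ the active set $A_1$ had size $|A_1| < \tfrac{k-1}{k}n$, then by Lemma~\ref{lemma_arbitrary_thresholds_stop} the process does not stop there, and I would push the counting — each of the $k$ seeds has $\geq \tfrac{k-1}{k}n - (k-1)$ neighbors outside the seed set, and these neighbors collectively receive many seed-edges — to force $|A_1|$ or $|A_2|$ across the threshold within the claimed number of rounds. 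I would organize this by bounding $\sum_{v \in A_0} |N(v)\cap \overline{A_0}|$ from below and the number of $\overline{A_0}$-vertices that can have fewer than $k$ seed-neighbors from above, concluding $|A_1| \geq \tfrac{k-1}{k}n$ (using $n$ large relative to $k$), whence $A_2 = V$ — in fact within three rounds with room to spare.

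For $k=2$ (Dirac graphs): here the argument is more delicate because the threshold in Lemma~\ref{lemma_Dirac_almost_there} is the strict inequality $|A| > \tfrac n2$, and the process genuinely can pause at exactly $\tfrac n2$ (the graph $DC_n$). After round~$1$, starting from two non-adjacent seeds, the common neighborhood argument gives a set $A_1$ whose size I would lower-bound by counting common neighbors of the two seeds: $|N(u)\cap N(w)| \geq \deg(u)+\deg(w)-n \geq 0$, which is too weak, so instead I would argue that $|A_1| \geq \tfrac n2$ roughly, then use Lemma~\ref{lemma_arbitrary_thresholds_stop} to say the process does not stop below $\tfrac n2$, so within a bounded number of rounds we reach $|A_i| \geq \tfrac n2$; then either $|A_i| > \tfrac n2$ and Lemma~\ref{lemma_Dirac_almost_there} finishes in one more round, or $|A_i| = \tfrac n2$ exactly, in which case I must show the active set still strictly grows (it is not the frozen $DC_n$ configuration, since that has no escape, but a generic set of size $\tfrac n2$ that arose from a smaller contagious set is not frozen) — one more round brings it past $\tfrac n2$, and then one final round completes the graph. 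Bookkeeping the rounds: round~$1$ to reach $\approx \tfrac n2$, possibly round~$2$ to cross $\tfrac n2$ strictly, round~$3$ via Lemma~\ref{lemma_Dirac_almost_there}, plus one buffer, gives four.

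The main obstacle I expect is the $k=2$ case, specifically ruling out the process stalling at $|A_i| = \tfrac n2$ for more than one round: I need to show that an active set of size exactly $\tfrac n2$ arising in the middle of a percolation process (rather than being handed to us as a frozen set) must have at least one inactive vertex with two active neighbors. The handle for this is that the \emph{previous} round's active set was strictly smaller, so some inactive vertex just got activated, and tracing the edges around that vertex and using the Dirac degree bound should exhibit the required escape edge; making this precise, and confirming that the round count is exactly four and not three, is where the care is needed. For $k \geq 3$ the only subtlety is the small-$n$ regime ($n$ close to $k$), which I would dispatch separately exactly as in the proof of Lemma~\ref{lemma_arbitrary_thresholds_almost_there}.
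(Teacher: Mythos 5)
Your plan has a genuine gap in the $k \geq 3$ case that propagates to $k=2$: the claim that $|A_1|$ can be pushed to $\tfrac{k-1}{k}n$ after a single round is false, and the counting you propose cannot repair it.

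The paper's own tightness examples (Examples~\ref{example_speed_tight_Dirac}, \ref{example_speed_tight_Dirac_large}, \ref{example_speed_tight_Dirac_even}) exhibit graphs of the required minimum degree and contagious sets $A_0$ with $|A_1|=k+1$, far below $\tfrac{k-1}{k}n$. For instance in Example~\ref{example_speed_tight_Dirac_large} with $n=k^3$, exactly one vertex is infected in round~$1$, so $|A_1|=k+1$ while $\tfrac{k-1}{k}n=(k-1)k^2$. So ``round one already reaches the threshold'' cannot be the mechanism, and indeed the theorem's bound of three rounds would otherwise be two. Your fallback counting via $\sum_{v\in A_0}|N(v)\cap\overline{A_0}|$ is also not enough: pairing the degree lower bound $k\bigl(\tfrac{k-1}{k}n-(k-1)\bigr)$ against the upper bound $(n-|A_1|)(k-1)+(|A_1|-k)\cdot k$ (at most $k-1$ seed-neighbors for non-infected vertices, at most $k$ for infected ones) collapses to the trivial $|A_1|\geq k$. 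The step you are missing is where the paper counts $|E(A_1,\overline{A_1})|$ rather than $|E(A_0,\overline{A_0})|$, and, crucially, \emph{assumes} $|A_2|\leq\lceil\tfrac{k-1}{k}n\rceil-1$ in order to upper-bound those edges: vertices in $\overline{A_2}$ contribute at most $k-1$ edges to $A_1$, while the (by hypothesis few) vertices in $A_2\setminus A_1$ can be adjacent to all of $A_1\setminus A_0$ but to at most $k-1$ vertices of $A_0$. Comparing that refined upper bound to the degree-based lower bound gives the contradiction that forces $|A_2|\geq\lceil\tfrac{k-1}{k}n\rceil$ — which is the engine that makes the round count constant. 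Without this two-round look-ahead (control of $A_2$ while counting around $A_1$), your argument cannot produce a jump of the right size.

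For $k=2$ your bookkeeping is also off by one: ``round~$1$ to reach $\approx n/2$'' fails for the same reason (Example~\ref{example_speed_tight_Dirac} has $|A_1|=3<n/2$). The paper instead lands $|A_2|\geq\lceil n/2\rceil$, uses the excluded case $DC_n$ to get strict growth at exactly $n/2$, and then Lemma~\ref{lemma_Dirac_almost_there} finishes in one more round — giving four. Finally, your worry about ``tracing the escape edge around a freshly activated vertex'' is unnecessary: the process is monotone, so stalling at $n/2$ means stalling forever, and the only Dirac graph where a size-$n/2$ active set with no escape exists is $DC_n$, which the paper excludes up front (it is handled separately with a two-round bound). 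No local analysis at the point of pausing is needed once $DC_n$ is ruled out.
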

\begin{proof}
	When studying the existence of contagious sets for~$k=2$, we saw that the graphs called~$DC_n$ require a special treatment; for the same reason we exclude the case that~$k=2$ and the input is~$DC_n$ in the sequel. Recall that every contagious set in these graphs activates all nodes within two rounds, therefore this does not impair the generality of the statement of the theorem.
	
	Let~$A_0 \subset V$ be an arbitrary contagious set for~$G$, and recall that~$A_i$ is the set of all nodes active after round~$i \geq 1$ and moreover~$\overline{A_i} = V \setminus A_i$.
	We prove that the initial activation of \emph{any choice} of~$k+1$ nodes will infect the whole graph within the two rounds (for~$k \geq 3$) or three rounds respectively ($k = 2$); note that this is sufficient to prove the theorem since any contagious set (smaller than~$n$) must activate at least one node in the first round. 
	Therefore, we set~$A_1 = k+1$ in what follows.
	
	We showed in the proof of Theorem~\ref{theorem_generalization_mindegree_arbitrary_thresholds} that the process will infect all remaining nodes, once a critical mass of nodes is active:
	for~$k \geq 3$ Lemma~\ref{lemma_arbitrary_thresholds_almost_there} states that \emph{all} remaining nodes will be activated in the subsequent round, when~$\frac{k-1}{k} \cdot n$ nodes are active.
	
	For the case~$k=2$ we have that all nodes will be active after two more rounds if~$\frac{n}{2}$ nodes are active: since the graph is assumed not to be~$DC_n$, the process does not stop at~$\frac{n}{2}$ nodes (cp.\ the proof of Theorem~\ref{theorem_generalization_mindegree_arbitrary_thresholds}), and hence the number of active nodes will increase to at least~$\frac{n}{2} + 1$ in the next round. Therefore, all nodes will be active after another round due to Lemma~\ref{lemma_Dirac_almost_there}.
	Thus, we assume
	\begin{equation}
		\label{Eq_speed_general_1}
		|A_2| \leq \left\lceil\frac{k-1}{k} \cdot n\right\rceil -1
	\end{equation}
	from now on.
	
	Due to the degree constraints, every node in~$A_1$ has at least~$\left\lceil\frac{k-1}{k} \cdot n\right\rceil - \left(|A_1| - 1\right)$ neighbors in~$\overline{A}_1$ and we have
	\begin{equation}
		\label{Eq_speed_general_lb}
		|E(A_1, \overline{A}_1)| \geq |A_1| \cdot \left(\left\lceil\frac{k-1}{k} \cdot n\right\rceil - |A_1| + 1\right).
	\end{equation}
	
	On the other hand, each node in~$\overline{A}_1$ has at most~$k-1$ neighbors in~$A_0$, otherwise the respective node would have been infected after the first round.
	Moreover, due to Eq.~(\ref{Eq_speed_general_1}) we have that there are at most~$\left\lceil\frac{k-1}{k} \cdot n\right\rceil -1 - |A_1|$ nodes in~$\overline{A_1}$ that become active in the second round, i.e.\ they belong to~$A_2 \setminus A_1$: observe that these nodes can be adjacent to all the nodes in~$A_1 \setminus A_0$ but only to at most~$k-1$ nodes in~$A_0$, since otherwise they would belong to~$A_1$.
	Therefore, we obtain the following upper bound:
	\begin{equation}
		\label{Eq_speed_general_ub}
		|E(A_1, \overline{A}_1)| \leq \left(n - |A_1|\right) \cdot (k-1) + \left(\left\lceil\frac{k-1}{k} \cdot n\right\rceil -1 - |A_1|\right) \cdot \left(|A_1| - |A_0|\right).
	\end{equation}
	
	We will show momentarily that this upper bound is smaller than the lower bound given in Eq.~(\ref{Eq_speed_general_lb}). Therefore, we obtain a contradiction to the assumption in Eq.~(\ref{Eq_speed_general_1}): there must be at least~$\left\lceil\frac{k-1}{k} \cdot n\right\rceil$ active nodes after the second round, and hence all nodes will be active after the third round if~$k\geq 3$ and after the fourth round if~$k=2$ respectively. 
	
	Subtracting the RHS of Eq.~(\ref{Eq_speed_general_ub}) from the RHS of Eq.~(\ref{Eq_speed_general_lb}) gives
	\begin{align*}
		& |A_1| \cdot \left[\left\lceil\frac{k-1}{k} \cdot n\right\rceil - |A_1| + 1\right] \\
		& - \left(n - |A_1|\right) \cdot (k-1) - \left(\left\lceil\frac{k-1}{k} \cdot n\right\rceil -1 - |A_1|\right) \cdot \left(|A_1| - |A_0|\right)\\
		= & \left(k+1 - |A_0|\right) \cdot |A_1| + |A_0| \cdot \left(\left\lceil\frac{k-1}{k} \cdot n\right\rceil -1\right) - n \cdot (k-1).
		\intertext{Now we use~$|A_0| = k$ and $\left\lceil\frac{k-1}{k} \cdot n\right\rceil = n - \left\lfloor \frac{n}{k} \right\rfloor$ to obtain}
		= & |A_1| + k \cdot n - k \cdot \left\lfloor \frac{n}{k} \right\rfloor - k - k \cdot n + n\\
		\geq & |A_1| - k,
	\end{align*}
	which is at least one, since~$|A_1| = k+1$ holds by assumption.
\end{proof}
%
%
Next we show that our bound is tight for~$k=2$. Subsequently, we will give tight examples for larger values of~$k$.
%
\begin{example}
	\label{example_speed_tight_Dirac}
	Consider the following graph on the vertex set $V=\{v_1,\ldots,v_8\}$:
	let $v_1,v_2,v_3$ be a clique, $v_4$ and $v_5$ be adjacent to $v_1$, while $v_7$ and $v_8$ are adjacent to $v_2$. Moreover, let $v_3$ be adjacent to $v_4$ and $v_6$. $v_4$ and $v_5$ are adjacent to each other as well as to $v_7$ and $v_8$. $v_7$ and $v_8$ are adjacent to each other and to $v_6$. $v_6$ is also adjacent to $v_5$. Every vertex has degree at least four. Thus, if~$v_1$ and~$v_2$ are activated, it takes four rounds for the entire graph to be infected. 
	Moreover, if we activate any three nodes, then it takes three rounds for the entire graph to be infected.
\end{example}
%
%
%
%
The next example shows the upper bound tight for odd $k \geq 3$.
\begin{example}
	\label{example_speed_tight_Dirac_large}
	Consider for odd $k \geq 3$ a graph on the vertex set $V=\{v_0,\ldots,v_{k^3-1}\}$. The edge set is defined by $E=\{(v_i,v_j) \; \mid \; |i-j|>\frac{k^2-1}{2} \mod k^3\}$. Notice that each vertex is non-adjacent only to itself and $2\cdot \frac{k^2-1}{2}$ additional vertices, implying that each vertex has degree $k^3-(k^2-1)-1=\frac{k-1}{k} \cdot k^3=\frac{k-1}{k} \cdot n$.
	We show that the activation process takes three rounds to infect all nodes when starting with the following seed: $A_0=\{v_{j \cdot k^2} \;|\; j=0,\ldots,(k-2)\}\cup \{v_{(k-1)\cdot k^2-1}\}$.
	
	Notice that at the beginning of the first round all inactive vertices are non-adjacent to one of the vertices in~$A_0$ except for the node~$v_{k^3-\frac{k^2-1}{2}-1}$: to see this, observe that in particular~$\left(k^3-\frac{k^2-1}{2}-1\right) - \left((k-1)\cdot k^2-1\right) = \frac{k^2 + 1}{2}$ and $k^3 - \left(k^3-\frac{k^2-1}{2}-1\right)$ are both larger than~$\frac{k^2-1}{2}$.
	Thus, $A_1=A_0 \cup \{v_{k^3-\frac{k^2-1}{2}-1}\}$. 
	
	Now we show that there is a vertex that is still inactive at the end of the second round, and hence by Theorem~\ref{theorem_speed_generalized} will be infected in the third.
	Observe that the nodes~$v_j$ with 
	$j\in\{k^3-k^2,\ldots,k^3-\frac{k^2-1}{2}-2\}\cup\{k^3-\frac{k^2-1}{2},\ldots,k^3-1\}$
	do not belong to~$A_0$ and are not adjacent to~$v_{k^3-\frac{k^2-1}{2}-1}$, the node infected in the first round. Thus, by our observation above, they do not have~$k$ neighbors in~$A_1$.
\end{example}
%
%
%
%
%
The last example addresses even~$k \geq 4$.
\begin{example}
	\label{example_speed_tight_Dirac_even}
	Consider a graph with vertex set $V=\{v_0,\ldots,v_{k\cdot (k+1)-1}\}$ for even $k$. 
	Then the condition on the minimum degree is $\frac{k-1}{k}\cdot n=(k-1)\cdot (k+1)=k^2-1$, so each vertex must be adjacent to at least $k^2-1$ vertices, i.e.\ it can be non-adjacent to at most~$k$ vertices (other than itself). 
	Therefore, we let $E=\{(v_i,v_j) \; \mid \; |i-j|>\frac{k}{2}\mod k^2+k\}$. For the seed that contains the vertices $v_{j \cdot (k+1)}$ with $j\in\{0,\ldots,k-2\}$ as well as vertex $v_{(k-1)\cdot (k+1)-1}$, we observe analogously to Example~\ref{example_speed_tight_Dirac_large} that only vertex $v_{k^2+\frac{k}{2}-1}$ becomes infected in the first round. 
	Moreover, in the second round the vertices whose indices differ by at most~$\frac{k}{2}$ from~$k^2+\frac{k}{2}-1$ do not become infected, since they are not adjacent to~$v_{k^2+\frac{k}{2}-1}$.
	Thus, Theorem~\ref{theorem_speed_generalized} states that these nodes will be infected in the third round since~$k > 2$.
\end{example}
%

\section{Extremal Number of Edges}
\label{sec:extremal}
%
%
What is the maximum number of edges in a graph with~$n$ nodes such that there is no contagious set of size~$k$ assuming that all nodes have threshold~$k$?
We provide the following tight bound for the case~$n\geq 2k+2$.
\begin{theorem}
	\label{theorem_extremal_mkk}
	Let~$k \geq 1$.
	For $n\geq 2k+2$ we have $M(n,k,k)=\binom{n-1}{2}$.
\end{theorem}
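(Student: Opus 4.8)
The plan is to establish the two inequalities $M(n,k,k) \geq \binom{n-1}{2}$ and $M(n,k,k) \leq \binom{n-1}{2}$ separately. The lower bound is easy: take a clique on $n-1$ vertices together with one isolated vertex $v$. This graph has $\binom{n-1}{2}$ edges and is disconnected, so no set can infect $v$; hence $m(G,k) = \infty > k$, witnessing $M(n,k,k) \geq \binom{n-1}{2}$. (One should remark that this is the unique extremal configuration, or at least note it, since the theorem claims an exact value.)

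For the upper bound I would argue the contrapositive: if $G$ has $n \geq 2k+2$ vertices and strictly more than $\binom{n-1}{2}$ edges, then $m(G,k) = k$. The first observation is that $|E(G)| > \binom{n-1}{2}$ forces $G$ to be connected — indeed, a disconnected graph on $n$ vertices has at most $\binom{n-1}{2}$ edges (the extreme case being $K_{n-1}$ plus an isolated vertex), so we get connectivity for free. Connectivity is exactly the necessary condition mentioned in the excerpt for $m(G,k)=k$, so the task is to upgrade "connected with many edges" to "has a contagious set of size $k$."

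The main work is therefore the following: show that a connected graph on $n \geq 2k+2$ vertices with more than $\binom{n-1}{2}$ edges always contains a contagious set of size exactly $k$. A natural approach is to find a vertex $v$ of high degree (an averaging argument gives a vertex of degree close to $n$, since the edge count is close to $\binom{n}{2}$; more carefully, $|E| > \binom{n-1}{2}$ means the complement $\overline{G}$ has fewer than $n-1$ edges, hence $\overline{G}$ is a forest with at most $n-2$ edges, in particular $\overline{G}$ is disconnected or very sparse) and seed a well-chosen $k$-subset near it. Since $\overline{G}$ has at most $n-2$ edges, at least two vertices of $G$ are adjacent to all others; call one of them $u$. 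Activating $u$ together with any $k-1$ of its neighbors should trigger a cascade: most vertices have $\overline{G}$-degree $0$ or $1$, so they see almost all of the seed. The plan would be to show that after one round a large fraction — in fact all but a bounded number depending on the structure of the sparse graph $\overline{G}$ — of vertices are infected, and then invoke a Lemma~\ref{lemma_arbitrary_thresholds_almost_there}-style "critical mass" argument (or a direct count) to finish. The delicate point will be choosing the $k-1$ companion seeds to handle the few vertices that are $\overline{G}$-adjacent to several seed candidates, and verifying the bound $n \geq 2k+2$ is exactly what makes the first-round infection large enough; I expect the edge cases (small $n$ relative to $k$, or $\overline{G}$ being a single long path or star) to be where the argument needs the most care.

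One clean way to organize the final step: since $\overline{G}$ is a forest with $\le n-2$ edges, the set $W$ of vertices with $\overline{G}$-degree $\ge 1$ has size at most $2(n-2)$, but more usefully the set $U$ of vertices adjacent in $G$ to everything else is large ($|U| \ge 2$, and typically $|U| = n - O(1)$). Seed $k$ vertices of $U$ if $|U| \ge k$; then every vertex of $\overline{U}$ not among the seeds has at least $k - (\text{its } \overline{G}\text{-degree within the seed})$ seed-neighbors, and one checks this is $\ge k$ except possibly for vertices with seed-neighbors missing, which form a small set handled by a second round. If $|U| < k$ one must seed into $W$ as well, but then $\overline{G}$ restricted to $W$ being a small forest lets us pick an independent-ish set so that no inactive vertex loses too many seed-neighbors. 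Assembling these cases — and confirming the threshold $n \ge 2k+2$ is tight (the excerpt implicitly suggests it may fail for smaller $n$) — is the crux; everything else is bookkeeping with the inequality $|\overline{G}| \le n-2$.
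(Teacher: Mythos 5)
Your lower bound ($K_{n-1}$ plus an isolated vertex) matches the paper exactly, and your observation that $|E(G)|>\binom{n-1}{2}$ forces connectivity is correct and is implicitly used in the paper's proof. But the upper-bound argument as sketched contains two false claims and never reaches a complete proof.

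First, from $|E(\overline{G})| \le n-2$ you cannot conclude that $\overline{G}$ is a forest. A graph on $n$ vertices with at most $n-2$ edges is necessarily disconnected, but it can contain cycles (e.g.\ a triangle plus isolated vertices). Second, and more damagingly, the assertion that ``at least two vertices of $G$ are adjacent to all others'' is false: take $n=6$, $k=2$, and let $\overline{G}$ be a perfect matching (three edges $\le n-2=4$). Then $G=K_{2,2,2}$ has $12>\binom{5}{2}=10$ edges, yet \emph{no} vertex of $G$ is adjacent to all others. Your plan of seeding $k$ universal vertices therefore has no base case in general, and your fallback (``seed into $W$, which is a small forest'') rests on the false forest claim and on $W$ being small, when in fact $W$ can be all of $V$. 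Even setting the errors aside, the decisive step — showing that the infection, once started, does not stall — is stated as a ``plan'' with acknowledged delicate points rather than carried out.

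The paper takes a different and cleaner route. Instead of locating high-degree vertices, it fixes any ``stalled'' set $T\supseteq A_0$ with $|T|\ge k$ (every vertex in $\overline T$ having $\le k-1$ neighbors in $T$) and counts the non-edges across the cut: there are at least $f(|T|)=(n-|T|)\bigl(|T|-(k-1)\bigr)$ of them, while the edge hypothesis caps the total number of non-edges at $n-2$. Since $f$ is concave and $f(k+1)=f(n-2)=n$ when $n=2k+2$ (and larger for $n>2k+2$), the process can only stall at $|T|=k$ or $|T|=n-1$. Those two boundary cases are then handled by a short degree argument (there is at most one vertex of degree $<k$, and if it exists it is placed in $A_0$). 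If you want to salvage your approach, you would need to replace the ``universal vertex'' step with something that works when $\overline{G}$ has no isolated vertex — for instance, pick a vertex $w$ of minimum $\overline{G}$-degree and seed $k$ neighbors of $w$, then argue the cascade grows — but at that point the cut-counting argument is both simpler and already does the job.
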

\begin{proof}
	To see~$M(n,k,k)\geq\binom{n-1}{2}$, note that a clique of~$n-1$ nodes plus an isolated node is a disconnected graph with~$n$ nodes and~$\binom{n-1}{2}$ edges. However, no disconnected graph can have a contagious set of size $k<n$ when the thresholds are $k$.
	In the sequel we show~$M(n,k,k)\leq\binom{n-1}{2}$, i.e.\ every graph on~$n$ nodes with at least~$\binom{n-1}{2} + 1$ edges has a contagious set of size~$k$ if all thresholds are~$k$.
	
	If a set~$S \subseteq V$ with~$|S| = k$ is not contagious for all thresholds equal to~$k$, then there is a set~$T$ with~$S \subseteq T$ such that each node in~$\overline{T}$ has at most~$k-1$ neighbors in~$T$; only then the infection does not spread outside of~$T$.
	
	The gist is that there are at least $$|\overline{T}| \cdot \left(|T| - (k-1)\right) = \left(n - |T|\right) \cdot \left(|T| - (k-1)\right)$$ pairs of nodes in~$T \times \overline{T}$ that are not adjacent.
	In particular, we claim that if~$|T| \in \{k+1, k+2,..., n-3, n-2\}$ then the number of non-adjacent node pairs is larger than~$n-2$.
	However, at most~$n-2$ pairs of nodes are not adjacent in a graph with~$n$ nodes and at least~$\binom{n-1}{2} + 1$ edges, since~$\binom{n-1}{2} + 1 = \binom{n}{2} - (n-2)$
	holds.
	Thus, if the number of active vertices is at least~$k+1$ and at most~$n-2$, then in the subsequent round at least one more node is infected newly, and therefore the infection does not stop until at least~$n-1$ nodes are active.
	
	Now we prove the claim. First observe that
	$$f(|T|) = \left(n - |T|\right) \cdot \left(|T| - (k-1)\right)$$
	is a quadratic function in~$|T|$.
	%
	Next we show that~$f(|T|)$ is larger than~$n-2$ for values of $|T| \in \{k+1,\ldots,n-2\}$.
	Recall that we assumed~$n \geq 2k + 2$, and observe that the number of non-adjacent pairs in~$T \times \overline{T}$ is minimized for any fixed~$k$ by setting~$n = 2k + 2$. Therefore the number of such pairs is at least~$(2k+2 - |T|)\cdot (|T| - k + 1)$. On the one hand, if~$|T| = k+1$ holds, then their number is~$(k+1) \cdot 2 = 2k+2 = n$. On the other hand, for~$|T| = n-2 = 2k$ their number is~$2 \cdot (k+1) = n$ again. Thus, the claim holds for both values of~$|T|$, and furthermore for all choices of~$|T|$ in between, since~$f''(|T|) = -2$ and hence~$f$ is concave.
	
	Thus, we focus on~$|T| \in \{k,n-1\}$ in the sequel.
	First we show how to select~$A_0$ with~$|A_0| = k$ such that~$|A_1| \geq k+1$ holds.
	If the graph does not contain any node of degree less than~$k$, we pick any node~$v$ and choose~$A_0$ to contain~$k$ neighbors of~$v$. Then~$v \in A_1$ holds and hence~$|A_1| \geq k+1$.
	
	Now assume there is a node~$u$ with degree~$d < k$.
	Note that any node of degree smaller~$k$ is non-adjacent to at least~$n-1 - \frac{n-2}{2} = \frac{n}{2}$ nodes, where we use~$\frac{n-2}{2} \geq k$. Hence there can be at most one such node because there are at most~$n-2$ non-adjacent pairs of nodes in the graph.
	
	Let~$G'$ be the graph after removing~$u$ and its~$d$ incident edges. Note that the degree of each node in~$G'$ was reduced by at most one due to the removal of~$u$, therefore all degrees in~$G'$ are at least~$k-1$. Hence we pick any node~$w$ that was adjacent to~$u$ (recall that the graph is connected) and choose~$A_0$ to contain~$u$ and~$k-1$ neighbors of~$w$. Then in the first round~$w$ will be infected, thus we have~$|A_1| \geq k+1$.
	
	We have already shown that if there are at least~$k+1$ active nodes, then the process does not stop until there are~$n-1$ active nodes.
	The last node becomes infected unless its degree is less than~$k$, but in this case it would have been selected for~$A_0$.
	Thus, the process cannot stop at~$n-1$ nodes.
	%
\end{proof}
%
%
Note that the statement of Theorem~\ref{theorem_extremal_mkk} does not hold for arbitrary $k\in\{1,\ldots, n-1\}$:
indeed, we required~$n \geq 2k + 2$.
Next we consider the case that~$k = n-1$, i.e.~$k$ is very large compared to~$n$.
\begin{theorem}
	For all~$n \geq 2$ we have $M(n,n-1,n-1) = \binom{n}{2} - \left\lceil\frac{n}{2}\right\rceil$ holds.
\end{theorem}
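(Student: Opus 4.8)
The plan is to first pin down exactly which graphs $G$ satisfy $m(G,n-1)>n-1$, and then recast the edge-maximization as a minimization over complement graphs.

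\emph{Characterization.} When the threshold equals $n-1$, a vertex $v$ can be activated only after all of the other $n-1$ vertices are already active, and even then only if $v$ is adjacent to all of them. Hence if $A_0$ is a contagious set of size $n-1$, the unique vertex $v\notin A_0$ must have $\mathrm{deg}(v)=n-1$; conversely, if some vertex $v$ has $\mathrm{deg}(v)=n-1$, then $A_0=V\setminus\{v\}$ is contagious. Since enlarging the seed set can only enlarge the final infected set, $m(G,n-1)\le n-1$ holds if and only if some contagious set of size exactly $n-1$ exists, which by the above happens if and only if $G$ has a vertex of degree $n-1$. Therefore $m(G,n-1)>n-1$ if and only if every vertex of $G$ is non-adjacent to at least one other vertex, i.e.\ the complement $\overline{G}$ has minimum degree at least one.

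\emph{Counting.} Maximizing $|E(G)|$ subject to this condition is the same as minimizing $|E(\overline{G})|$ over all $n$-vertex graphs with no isolated vertex. Every edge is incident to at most two vertices, so at least $\left\lceil\frac{n}{2}\right\rceil$ edges are needed to avoid isolated vertices; and this many suffice, e.g.\ a perfect matching when $n$ is even, and a matching on $n-1$ of the vertices together with one further edge from the last vertex to a matched one when $n$ is odd. Hence $\min|E(\overline{G})|=\left\lceil\frac{n}{2}\right\rceil$, giving $M(n,n-1,n-1)=\binom{n}{2}-\left\lceil\frac{n}{2}\right\rceil$.

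There is essentially no hard obstacle; the only step requiring care is the observation that at threshold $n-1$ the last vertex can be activated only if it has full degree, after which the problem collapses to the elementary fact about the minimum number of edges with no uncovered vertex. As a sanity check one should verify the boundary case $n=2$ (threshold $1$), where the claim reads $M(2,1,1)=0$: indeed $m(G,1)>1$ forces $G$ to have an isolated vertex, hence $G$ is edgeless.
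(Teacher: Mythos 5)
Your proof is correct and rests on essentially the same two ideas as the paper's: the observation that a vertex can be activated at threshold $n-1$ only if it is adjacent to all other vertices, and the elementary edge-cover bound that a graph on $n$ vertices with no isolated vertex has at least $\left\lceil\frac{n}{2}\right\rceil$ edges. The paper packages this slightly differently (it exhibits $K_n$ minus a (near-)perfect matching for the lower bound, and for the upper bound argues that with fewer than $\left\lceil\frac{n}{2}\right\rceil$ non-edges some vertex must have full degree), while you first establish the clean characterization $m(G,n-1)>n-1 \iff \delta(\overline{G})\geq 1$ and then optimize in the complement; but these are the same argument viewed from two angles.
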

\begin{proof}
	We reuse the graph of Example~\ref{example_generalized_mindegree_tight} in Sect.~\ref{section_mindeg_arbitrary_thresholds}.
	For a clique on $n$ vertices we pick a perfect matching~$M$ if~$n$ is even. For odd~$n$ we set~$M$ to be a near-perfect matching plus an edge incident to the unmatched node.
	Then we delete the edges of~$M$. Let~$k=n{-}1$. Each vertex has degree~$n{-}2$, hence there is no contagious set of size $n{-}1$.
	Thus, $M(n,n-1,n-1) \geq \binom{n}{2} - \left\lceil\frac{n}{2}\right\rceil$.
	
	On the other hand, if we add one more edge, i.e.\ if we require the graph to have at least~$\binom{n}{2} - \left\lceil\frac{n}{2}\right\rceil +1$ edges, then there must be a node~$u$ that is adjacent to all others: there are at most~$\left\lceil\frac{n}{2}\right\rceil -1$ edges not present in the graph (compared to~$K_n$). Hence activating~$V \setminus \{u\}$ is a contagious set and we have shown that $M(n,n-1,n-1) = \binom{n}{2} - \left\lceil\frac{n}{2}\right\rceil$ holds.
	%
\end{proof}
%

Now we focus on the case that all nodes have threshold two.
We will use the following lemma to prove a tight bound on the number of edges that guarantees the existence of a contagious set of size $k\ll n$. We did not attempt to find the exact $k(n)$ for which Theorem~\ref{thm:R} holds. It should be noted that certain restrictions on $k$ have to be imposed, as is shown in Example~\ref{example_Mnk2}.
\begin{lemma}\label{largeU}
	Consider an $n$-vertex graph $G$ with at least~$\binom{n-k+1}{2}+\lfloor \frac{k+1}{2}\rfloor$ edges and suppose that $n>32k+4$. Then there is an induced subgraph $G'$ of $G$, such that $|V(G')|\geq n-8k$ and each vertex in $G'$ has degree at least $\frac{|V(G')|}{2}+1$ within~$G'$.
\end{lemma}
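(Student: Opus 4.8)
The plan is to repeatedly delete low-degree vertices until every surviving vertex has degree at least half the current vertex count plus one, and then argue that we cannot have deleted too many vertices — because each deletion destroys relatively few edges, while a graph that has shrunk below $n-8k$ vertices must have lost so many edges that it could not have started with $\binom{n-k+1}{2}+\lfloor\frac{k+1}{2}\rfloor$ of them.

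First I would set up the cleaning procedure precisely. Start with $G_0 = G$ on $n$ vertices. Given $G_i$ on $n-i$ vertices, if some vertex $v$ has degree (within $G_i$) strictly less than $\frac{n-i}{2}+1$, delete $v$ to obtain $G_{i+1}$; otherwise stop and output $G' = G_i$. The stopping condition is exactly the conclusion we want on the degrees, so the only thing to prove is the lower bound $|V(G')| \geq n-8k$. Suppose for contradiction the process runs for more than $8k$ steps, so at least $8k+1$ vertices get deleted. The key accounting step is to bound the number of edges removed: when we delete the $(i+1)$-st vertex it has degree less than $\frac{n-i}{2}+1$, so after $t$ deletions the number of edges removed is at most $\sum_{i=0}^{t-1}\left(\frac{n-i}{2}+1\right) < \frac{t n}{2} + t$. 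Taking $t = 8k+1$ (or more) and using $n > 32k+4$, I would show this is comfortably less than the ``edge deficit'' of $G$ below $K_n$, i.e.\ less than $\binom{n}{2} - \binom{n-k+1}{2} - \lfloor\frac{k+1}{2}\rfloor + 1$, which is roughly $kn$ for large $n$ relative to $k$. The point is that $4k n$-ish edges removed (from $8k$ deletions) would already exceed what a graph this dense can afford to lose before being emptied — more carefully, after deleting $8k$ vertices the remaining graph on $n-8k$ vertices has at most $\binom{n-8k}{2}$ edges, so $G$ had at most $\binom{n-8k}{2} + (\text{edges removed})$ edges, and I would check $\binom{n-8k}{2} + \frac{(8k)n}{2} + 8k < \binom{n-k+1}{2}+\lfloor\frac{k+1}{2}\rfloor$ fails in the right direction — i.e.\ derive that $G$ has fewer edges than hypothesized, a contradiction.

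The arithmetic heart is the comparison $\binom{n-8k}{2} + 4kn + 8k$ versus $\binom{n-k+1}{2}$. Expanding, $\binom{n-k+1}{2} - \binom{n-8k}{2}$ has leading term $\frac{(n-k+1)^2 - (n-8k)^2}{2} = \frac{(7k+1)(2n-9k+1)}{2} \approx 7kn$, which dominates the $4kn$ coming from removed edges once $n$ is large compared to $k$; the threshold $n > 32k+4$ should be exactly what makes $7kn$ (minus lower-order junk) exceed $4kn + 8k + \lfloor\frac{k+1}{2}\rfloor$. I would carry out this estimate cleanly, being a little generous so the constants obviously work, rather than optimizing. If one wants the slightly sharper bound one can note the removed-edge sum telescopes better than $\frac{tn}{2}+t$ since the $\frac{n-i}{2}$ terms decrease, but that refinement is not needed here.

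The main obstacle I anticipate is purely bookkeeping: making sure the ``$+1$'' in the degree threshold $\frac{|V(G')|}{2}+1$ and the floor/ceiling terms are handled so that the final inequality is genuinely strict, and confirming that $32k+4$ is the right constant rather than something larger. There is no structural difficulty — the argument is a standard peeling/degeneracy argument — so the write-up should just fix the deletion rule, bound edges removed by a clean sum, and close with a one-line contradiction to the edge count, citing $n > 32k+4$ at the single place it is used.
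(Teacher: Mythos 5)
Your peeling argument is correct, and the constants work out (with slack to spare), but it takes a genuinely different route from the paper. The paper's proof is a one-shot argument: it first bounds the number of non-adjacent pairs in $G$ by roughly $(k-1)n$, observes that any vertex of degree below $\tfrac{3}{4}n$ accounts for at least $\tfrac{n}{4}$ of those non-edges, and concludes (by double counting) that at most $8k$ vertices have degree below $\tfrac{3}{4}n$. It then \emph{defines} $U$ to be the set of vertices of $G$ with degree at least $\tfrac{3}{4}n$, so $|U|\geq n-8k$, and checks that each such vertex keeps at least $\tfrac{3}{4}n-8k \geq \tfrac{n}{2}+1$ of its neighbors inside $U$ (this is the single place $n>32k+4$ is invoked). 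You instead run a degeneracy-style peeling and bound the edges lost per step by $\tfrac{n-i}{2}+1$; a comparison of the resulting upper bound $\binom{n-8k}{2}+4kn+O(k)$ against the hypothesis $\binom{n-k+1}{2}+\lfloor\tfrac{k+1}{2}\rfloor$, using $\binom{n-k+1}{2}-\binom{n-8k}{2}\approx 7kn$, yields the contradiction. Both are sound; the paper's version is marginally cleaner because it avoids the telescoping sum and produces a canonical $G'$ (the high-degree vertices of $G$ itself, independent of deletion order), and it uses the threshold $n>32k+4$ in a more essential way ($\tfrac{n}{4}>8k+1$), whereas your argument would actually tolerate a somewhat smaller bound on $n$. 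Neither difference affects correctness.
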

%
%
\begin{proof}
	By our assumptions, $G$ has at most $$\binom{n}{2}-\binom{n-k+1}{2}-\left\lfloor\frac{k+1}{2}\right\rfloor=(k-1)\cdot n-\frac{k^2-k}{2}-\left\lfloor\frac{k+1}{2}\right\rfloor<(n-1)\cdot k$$
	pairs of non-adjacent vertices. 
	Since in a complete graph each vertex has degree~$n-1$, at most $\frac{2k}{1/4}=8k$ vertices in~$G$ can have fewer than $(1 - \frac{1}{4})\cdot n-1$ neighbors, as the number of pairs of non-adjacent vertices is at most $nk$. We prove that the vertices with degree at least $(3/4)n$ in $G$ form a subgraph $G'$ with the desired property. Let $U$ consist of all vertices in $G$ of degree at least $(3/4)n$ and $G'$ be the graph induced by $U$. Each vertex in $U$ can be adjacent to at most $8k$ vertices in $\overline{U}$ since $|V\setminus U|\leq 8k$. It follows that every vertex in $U$ must be adjacent to at least $(3/4)n-8k$ vertices within $U$. Since $|U|\leq n$, it suffices for each vertex in $G'$ to have $\frac{n}{2}+1$ neighbors within $U$.
	%
\end{proof}

\begin{theorem}\label{thm:R}
	For all $k\geq 2$  there exists $n_k\in\mathbb{N}$, such that for all $n\geq n_k$,
	$$M(n,k,2)=\binom{n-k+1}{2}+\left\lfloor\frac{k-1}{2}\right\rfloor.$$
\end{theorem}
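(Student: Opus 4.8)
I would prove the two inequalities $M(n,k,2)\ge\binom{n-k+1}{2}+\lfloor\frac{k-1}{2}\rfloor$ and $M(n,k,2)\le\binom{n-k+1}{2}+\lfloor\frac{k-1}{2}\rfloor$ separately, the second (``every graph with one more edge has a contagious set of size $k$'') being the substantive direction and the natural partner of Lemma~\ref{largeU}.

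For the lower bound I would exhibit an $n$-vertex graph $G$ with exactly $\binom{n-k+1}{2}+\lfloor\frac{k-1}{2}\rfloor$ edges and no contagious set of size $k$: take the disjoint union of a clique on $n-k+1$ vertices and a graph $H$ on the remaining $k-1$ vertices consisting of $\lfloor\frac{k-1}{2}\rfloor$ independent edges (plus one isolated vertex when $k$ is even). Every vertex of $H$ has degree at most one, hence can never be infected when the threshold is two, so all $k-1$ vertices of $H$ must be seeds; since $H$ is a separate component, starting the infection inside the clique requires two more seeds. Thus $m(G,2)\ge k+1$, so $M(n,k,2)$ is at least the number of edges of $G$.

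For the upper bound I would show that every $n$-vertex graph $G$ with at least $\binom{n-k+1}{2}+\lfloor\frac{k+1}{2}\rfloor$ edges has $m(G,2)\le k$ once $n\ge n_k$, proceeding by induction on $k$ (the base $k=2$ being the special case $|D|\le 1$ below). Such a $G$ has fewer than $(n-1)k$ non-adjacent pairs, so Lemma~\ref{largeU} yields an induced subgraph $G'$ on a set $U$ with $|U|\ge n-8k$ and minimum degree within $G'$ exceeding $|U|/2$. By the $k=2$ case of Theorem~\ref{theorem_generalization_mindegree_arbitrary_thresholds}, two suitable seeds in $U$ infect all of $G'$; running the process on $G$ from these seeds, let $D$ be the set of vertices never infected. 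Each $v\in D$ has at most one neighbour outside $D$, hence at least $n-|D|-1$ non-neighbours; comparing $|D|\cdot(n-|D|-1)$ with the bound on the number of non-edges and using $|D|\le 8k$ forces $|D|\le k-1$ for $n\ge n_k$ large enough (this is where $n_k$ gets pinned down). If $|D|\le k-2$ we add $D$ to the two seeds and are done.

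The delicate case, which I expect to be the main obstacle, is $|D|=k-1$: then every extra seed must be accounted for exactly, and $G$ is pinned close to the extremal clique-plus-low-degree configuration. I would split according to the edges incident to $D$. (i) If some $v\in D$ has a neighbour in $D$ \emph{and} a neighbour outside $D$, seed $D\setminus\{v\}$ together with two seeds for $V\setminus D$; then $v$'s outside neighbour becomes infected and its $D$-neighbour is a seed, so $v$ is infected, for a total of $k$ seeds. (ii) Otherwise $D$ is partitioned into a set $D_0$ of vertices all of whose neighbours lie in $D$ (these form a union of connected components of $G$) and a set of degree-one ``pendants'' attached to $V\setminus D$. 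When $D_0=\varnothing$ the edge count forces $V\setminus D$ to be a clique minus $O(k^2)$ edges, hence a super-Dirac graph on $n-k+1$ vertices; choosing for $n$ large a vertex $u\in V\setminus D$ adjacent to all $k-1$ pendant attachment points and seeding $D\cup\{u\}$ infects those attachment points in the first round and then, by Theorem~\ref{theorem_generalization_mindegree_arbitrary_thresholds}, all of $V\setminus D$. When $D_0\ne\varnothing$ the graph is disconnected; I would peel off an isolated vertex and invoke the inductive hypothesis for $k-1$ (since $\binom{n-k+1}{2}+\lfloor\frac{k+1}{2}\rfloor$ strictly exceeds $M(n-1,k-1,2)$), and if there is no isolated vertex argue directly that the extra components are few, small, and cost at most one seed per vertex, so the total stays at most $k$. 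The recurring technical burden is the exact seed-counting in these near-extremal configurations and verifying that $n_k$ can be chosen to make the $|D|\le k-1$ estimate and the existence of the common neighbour $u$ simultaneously valid.
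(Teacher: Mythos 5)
Your lower-bound construction is exactly the paper's (clique on $n-k+1$ vertices plus $\lfloor\frac{k-1}{2}\rfloor$ independent edges, with an isolated vertex when $k$ is even), and your use of Lemma~\ref{largeU} to produce the dense core $U$ is also the paper's starting point. Your $D$ coincides with the paper's $R=V\setminus\langle U\rangle$, and the non-edge count giving $|D|\le k-1$ for $n$ large is a legitimate observation (the paper never states this as a clean bound, but it is implicit). Where your plan genuinely diverges is in the overall logic: the paper never tries to exhibit a contagious set of size exactly $k$ in the extremal regime. Instead, in each of its four cases it assumes $m(G,2)>k$, derives a lower bound on $|R|$ from the size of a contagious set it \emph{could} build, and then shows the edge total is at most $\binom{n-k+1}{2}+\lfloor\frac{k-1}{2}\rfloor$, contradicting the hypothesis. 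Your plan commits instead to constructing a $k$-element contagious set directly once $|D|=k-1$, which is exactly the tight case and is where the gaps appear.

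Concretely, two steps do not hold up as written. First, in case (ii) with $D_0\ne\varnothing$ and no isolated vertex, you say the components ``cost at most one seed per vertex, so the total stays at most $k$.'' That gives $|D|+2=k+1$ seeds, not $k$; you need either to save a seed (possible only when some $D_0$-component has a vertex of degree at least two) or to fall back on an edge-count contradiction when every $D_0$-component is a single edge -- which is precisely the extremal configuration, and precisely what the paper's case ``$E(R)\le\lfloor|R|/2\rfloor$, no edges from $R$ to $U\cup W$'' handles by bounding $|E|\le\binom{n-|R|}{2}+\lfloor|R|/2\rfloor$. As stated, your claim is simply false for $D$ equal to $\lfloor\frac{k-1}{2}\rfloor$ disjoint edges (plus an isolated vertex if $k$ even). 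Second, your induction is ill-founded: you declare the base $k=2$ to be ``the special case $|D|\le 1$ below,'' but $|D|=1$, $k=2$ falls squarely into the delicate case (since $k-2=0<1$), and there is no $k=1$ level to induct to. The paper needs no induction on $k$ at all. If you replace the explicit-seed-construction in the tight subcases with the paper's edge-count contradictions, your outline becomes essentially the paper's proof under a different case decomposition; as it stands, the $|D|=k-1$, $D_0\ne\varnothing$ branch and the induction base are gaps.
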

\begin{proof}
	We begin by lower bounding $M(n,k,2)$. Consider for odd $k$ a graph $G$ on $n$ vertices, where $k{-}1$ vertices form a perfect matching, i.e.\ there are $\frac{k-1}{2}$ isolated edges, and the remaining $n{-}k{+}1$ vertices form a clique. Observe~$|E|=\binom{n-k+1}{2} {+} \frac{k-1}{2}$. We claim $m(G,2) {>} k$: a contagious set would have to contain the $k{-}1$ former vertices plus two vertices in the clique (i.e., $k{+}1$ vertices). Notice that adding any edge would decrease the size of a minimum contagious set to $k$. For even $k$, a similar construction with an isolated vertex, $\frac{k-2}{2}$ isolated edges, and a clique on $n-k+1$ vertices shows the corresponding bound.
	
	We will prove the upper bound on~$M(n,k,2)$ for $n\geq n_k=9k{+}\binom{8k}{2}$. Consider a graph $G$ with at least $\binom{n-k+1}{2}{+}\lfloor\frac{k-1}{2}\rfloor {+}1$ edges. Define $U$ as a subset of vertices of maximum cardinality with the property that each vertex in $U$ has at least $\frac{|U|}{2}{+}1$ neighbors in $U$. By Lemma \ref{largeU}, $|U|\geq n{-}8k$. Let  $W{:=}\langle U \rangle {\setminus} U$ and $R=V{\setminus} (U{\cup} W)$. Notice then that $|R|\leq 8k$ and each vertex in $R$ can have at most one neighbor in $U{\cup} W$, i.e.\ $E(R,V{\setminus}R) {\leq} |R|$, as otherwise this vertex would have been infected by $U{\cup} W$ and thus would not belong to $R$.
	
	As any two vertices in $U$ share a common neighbor, by Lemma~\ref{lem:three}, all of $U$ is eventually infected when just two vertices in $U$ are active. In this case, $U$ and by definition also all of $W$ would become active. In the remainder of the proof, we will show by case distinction that with $n\geq n_k$ the assumption that there is no contagious set of size $k$ implies that the number of edges in the graph is bounded above by $\binom{n-k+1}{2}+\left\lfloor\frac{k-1}{2}\right\rfloor$. This contradicts our choice of $G$ as a graph with $\binom{n-k+1}{2}+\left\lfloor\frac{k-1}{2}\right\rfloor+1$ edges and thus proves that $G$ must have a contagious set of size $k$. The different cases we need to consider are as follows:
	\begin{itemize}
		\item The number of edges within $R$ is greater than $\left\lfloor\frac{|R|}{2}\right\rfloor$.
		\item The number of edges within $R$ is at most $\left\lfloor\frac{|R|}{2}\right\rfloor$ and there are no edges from $R$ to $U\cup W$.
		\item The number of edges within $R$ is at most $\left\lfloor\frac{|R|}{2}\right\rfloor$ and there is at least one edge from $R$ to $W$.
		\item There is at least one edge from $R$ to $U$.
		
	\end{itemize}
	
	We first consider the case that $E(R)>\left\lfloor\frac{|R|}{2}\right\rfloor$.  Then there must be a vertex $v \in R$ with at least two neighbors in $R$. Activating $R{\setminus}\{v\}$ will infect $v$. Thus,  there is a contagious set of size $|R|{-}1{+}2$ and if $m(G,2){>}k$ we must have $|R|{\geq} k$. 
	%
	%
	Since the number of edges within $R$ is at most $\binom{|R|}{2}$ and we already saw $E(R,V {\setminus} R) {\leq} |R|$, we have that $|E| {\leq} \binom{n-|R|}{2}{+}|R|{+}\binom{|R|}{2}$.
	Assuming the bounds $k\leq |R|$, $n\geq n_k$ however, guarantees
	\begin{equation}
		\label{Eq_star}
		\binom{n-k+1}{2}-\binom{n-|R|}{2} \geq \binom{n-k+1}{2}-\binom{n-k}{2}\geq n-k
	\end{equation}
	Thus, we have~$\binom{n-k+1}{2}-n+k \geq \binom{n-|R|}{2}$ and
	\begin{align}\label{Eq_star2}
		|E| & \leq \binom{n-|R|}{2}+|R|+\binom{|R|}{2}\leq \binom{n-k+1}{2}-n+k+|R|+\binom{|R|}{2} \nonumber \\
		& \leq \binom{n-k+1}{2}.
	\end{align}
	where the second inequality follows from our observation that~$|R|\leq 8k$ and the initial assumption $n\geq n_k=9k+\binom{8k}{2}$.
	This settles the first case.

	Assume next that $E(R) {\leq} \left\lfloor\frac{|R|}{2}\right\rfloor$ and there are no edges from $R$ to $U{\cup} W$. In this case, $|E|{\leq}  \binom{n-|R|}{2} {+} \left\lfloor \frac{|R|}{2}\right\rfloor$. Suppose no contagious set of size $k$ exists, so $|R|{>}k{-}2$. Thus, $$|E| \leq \binom{n-|R|}{2}{+} \left\lfloor \frac{|R|}{2}\right\rfloor \leq \binom{n-k+1}{2} {+}\left\lfloor\frac{k-1}{2}\right\rfloor.$$
	This inequality is tight for $|R|=k{-}1$ and follows from Inequality~(\ref{Eq_star}) and from~$n{-}k > \left\lfloor \frac{|R|}{2}\right\rfloor$ if $|R|{>}k{-}1$. This concludes the second case.
	
	%
	Suppose $E(R) \leq \left\lfloor \frac{|R|}{2}\right\rfloor$ and that there is an edge from a vertex in $R$ to a vertex $v\in W$. Recall that $U$ was chosen as the set of maximum cardinality with the property that every vertex $u\in U$ has at least $\frac{|U|}{2}{+}1$ neighbors. 
	Therefore, the graph $H$ induced on $U \cup \{v\}$ does not have the property that every vertex in $H$ has at least $\frac{|U|+1}{2}{+}1$ neighbors in $H$. Thus, there must be a vertex in $H$ that has less than $\frac{|U|+1}{2}{+}1$ neighbors within $H$. It follows that there are at least $\frac{|U|-1}{2}{-}1$ non-adjacent pairs of nodes in $H$. Recall that Lemma~\ref{largeU} implied that $|U| \geq n {-} 8k$. 
	Subtracting the non-existing edges inside~$U$ from the number of possible edges in~$V {\setminus} R$, we obtain
	\begin{align*}
		E(U \cup W) & \leq \binom{n-|R|}{2} - \left(\frac{|U| - 1}{2} - 1\right) = \binom{n-|R|}{2} - \frac{|U|}{2} + \frac{3}{2}\\
		& \leq \binom{n-|R|}{2} - \left(\frac{n}{2} - 4k - \frac{3}{2}\right).
	\end{align*}
	%
	%
	%
	As there are at most~$|R|$ edges from $R$ to $W$ and at most $\left\lfloor\frac{|R|}{2}\right\rfloor$ within $R$, we find that
	\[
	|E|\leq \binom{n-|R|}{2}-\left(\frac{n}{2}-4k-\frac{3}{2}\right)+|R|+\left\lfloor \frac{|R|}{2}\right\rfloor\leq \binom{n-|R|}{2},\]
	where the inequality holds with $n\geq n_k$ since $|R|\leq 8k$. Activating all of $R$ and two vertices in $U$ gives a contagious set of size $|R|+2$. If no contagious set of size $k$ exists, it follows that $|R|> k-2$ and $|E|$ is upper bounded again by $\binom{n-k+1}{2}$.
	
	Finally, we deal with the case if there is an edge from $R$ to a vertex $v \in U$. By construction, activating all vertices in $R$ together with a neighbor of $v$ in $U$, must infect $v$, then a third vertex in $U$ and then  --- by Lemma~\ref{lem:three} --- all of~$U$. Thus, there is a contagious set of size $|R|+1$.
	If no contagious set of size $k$ exists, then $|R|\geq k$ and Inequality (\ref{Eq_star}) holds. Moreover, as $|E|$ in this case is bounded from above by $\binom{n-|R|}{2}+|R|+\binom{|R|}{2}$, Inequality (\ref{Eq_star2}) implies that $|E|\leq \binom{n-k+1}{2}$.
	
	We have thus exhausted all cases. In each one, the assumption that no contagious set of size $k$ exists contradicted $G$ having more than $\binom{n-k+1}{2}+\left\lfloor\frac{k-1}{2}\right\rfloor$ edges. Thus, we may conclude that if~$n\geq n_k$, then $M(n,k,2)=\binom{n-k+1}{2}+\left\lfloor\frac{k-1}{2}\right\rfloor$.
	%
\end{proof}
\begin{example}
	\label{example_Mnk2}
	We construct a family of graphs to demonstrate that $$M(n,k,2)=\binom{n-k+1}{2}+\left\lfloor\frac{k-1}{2}\right\rfloor$$ 
	does not hold for arbitrary $k$ and $n$. Consider for $k\geq\frac{2n+2}{3}$ a clique on $n-k$ vertices together with a star on $k$ vertices. All $k-1$ leaves of the star must be contained in a contagious set and so do two vertices from the clique, so there is no contagious set of size $k$. However, the number of edges is
	$\binom{n-k}{2}+(k-1)=\binom{n-k+1}{2}+\frac{k}{2}+\frac{3k}{2}-(n+1)
	\geq \binom{n-k+1}{2}+\frac{k}{2} > \binom{n-k+1}{2}+\left\lfloor\frac{k-1}{2}\right\rfloor.$
\end{example}

\section{Contagious Sets of Size Two in Ore Graphs}
\label{sec:ore}
In this section we study Ore graphs, that are a generalization of Dirac graphs.
Recall that an $n$-vertex graph is a Dirac graph if every vertex in the graph is of degree at least $n/2$. For an Ore graph~$G = (V,E)$ we have that~$u,v \in V$ with~$(u,v) \notin E$ implies~$\mathrm{deg}(u) + \mathrm{deg}(v) \geq n$.
We focus in this section exclusively on the case~$r=2$.

The upper bound in \cite{Ackerman,Reichman} shows that in a Dirac graph there \emph{exists} a contagious set of size three.
In Theorem~\ref{theorem_generalization_mindegree_arbitrary_thresholds} we showed that there even exists a contagious set of size two.

We wish to generalize this result to Ore graphs, however, some new ideas are required.
For example, Lemma~\ref{lemma_arbitrary_thresholds_stop} does not extend to Ore graphs: there exist $n$-vertex Ore graphs such that there is a selection of up to~$\left\lfloor\frac{n}{2}\right\rfloor$ nodes that do not form a contagious set.
\begin{example}
	\label{example_ore_graphs}
	We construct the graph as follows: the set $S = \{v_1,v_2,\ldots, v_c\}$ forms a clique. The remaining~$n-c$ nodes also form a clique, and are partitioned into~$c$ disjoint groups~$G_1, G_2,\ldots,  G_c$. 
	We let~$c \leq \lfloor\frac{n}{2}\rfloor$, thus every~$G_i$ is non-empty.
	Every node in~$G_i$ is adjacent to~$v_i$ but not to any other node in~$S$. Hence~$S$ is not a contagious set.
	Moreover, note that for any pair~$(u,v) \in S \times \overline{S}$ we have $$\mathrm{deg}(u) + \mathrm{deg}(v) = (c - 1 +1) + (n - c - 1 +1) = n,$$ hence we have constructed an Ore graph.
	Here it is crucial to note that pairs of nodes within~$S$ (and in~$\overline{S}$ resp.) are adjacent and hence their degrees are not required to sum up to~$n$ in a pairwise manner.
	Notice that for $c=\frac{n}{2}$, the constructed graph is $DC_{n}$.
	%
\end{example}
In Sect.~\ref{section_proof_ore_graphs} we show the following.
%
\begin{theorem}
	\label{thm:Ore}
	Every Ore graph $G=(V,E)$ has a contagious set of size two.
\end{theorem}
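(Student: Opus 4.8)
The plan is to find two non-adjacent vertices $u,v$ whose activation infects the whole graph, exploiting the structure forced by the Ore condition. First I would dispose of the easy regime: if $G$ is actually a Dirac graph, Theorem~\ref{theorem_generalization_mindegree_arbitrary_thresholds} already gives a contagious set of size two, so I may assume some vertex $x$ has $\deg(x) < n/2$. The Ore condition then forces every non-neighbor of $x$ (and there are more than $n/2$ of them) to have degree strictly greater than $n/2$; in particular the set $H$ of ``high-degree'' vertices (degree $\geq n/2$, say) is large, and its complement $L$ (the ``low-degree'' vertices) forms a clique, because any two non-adjacent low-degree vertices would violate $\deg(u)+\deg(v)\geq n$. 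So the global picture is: a clique $L$ of low-degree vertices, a large set $H$ of high-degree vertices, and every low-degree vertex $w\in L$ has all of its few non-neighbors inside $H$ — equivalently $w$ is adjacent to all of $L\setminus\{w\}$ and to $\deg(w)-|L|+1$ vertices of $H$.

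Next I would show the activation can be bootstrapped from within $H$. The key sub-claim is that the subgraph induced on $H$ is dense enough that, as in Lemma~\ref{lem:three}/Lemma~\ref{lemma_Dirac_almost_there}, once two vertices of $H$ are active the whole of $H$ becomes active: any two vertices of $H$ have a common neighbor (their degrees sum to at least $n$, so by inclusion–exclusion they share a neighbor), which seeds a third active vertex of $H$, and then a counting argument analogous to Lemma~\ref{lemma_arbitrary_thresholds_stop} — applied to the set of currently-active vertices versus the inactive high-degree vertices, using that inactive high-degree vertices have $\leq 1$ active neighbor while active vertices have many neighbors outside — shows the infected set inside $H$ cannot stall before exhausting $H$. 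Here one has to be a little careful: the relevant ``$n$'' for this counting is not $|H|$ but $n$, since high-degree vertices have degree $\geq n/2$ while there are at most $n/2$ vertices total, so an inactive high-degree vertex with $\leq 1$ active neighbor has $\geq n/2-1$ inactive neighbors, forcing the inactive set to be large — this is exactly the $DC_n$-style dichotomy, and the only obstruction is the matching-clique configuration, which has an explicit size-two contagious set. Once all of $H$ is active, each $w\in L$ has $\deg(w)-|L|+1\geq 2$ active neighbors in $H$ provided $|L|\leq \deg(w)-1$; since every $w\in L$ is non-adjacent to $x$-type vertices in $H$, $|L|$ is small relative to $\deg(w)$, so in fact $w$ gets $\geq 2$ active neighbors and $L$ is swallowed. (If some $w\in L$ has only one non-neighbor, i.e.\ $\deg(w)=n-2$, it already has $\geq 2$ neighbors in $H$ trivially.)

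It remains to produce the \emph{initial} non-adjacent pair that triggers two active vertices in $H$. I would pick a vertex $a$ of minimum degree; it has a non-neighbor $b$, and $\deg(a)+\deg(b)\geq n$, so $a$ and $b$ together are non-adjacent to at most $n-2 - (\deg(a)+\deg(b)) + \dots$ few vertices — more usefully, $a$ and $b$ have a common neighbor $c$, and moreover I can try to choose $b$ so that some vertex $c$ is adjacent to both; activating $\{a,b\}$ then infects $c$, giving at least one vertex in $H$ active; if $c\in H$ and one of $a,b\in H$ we have our two vertices in $H$. The genuinely delicate case is when the activations keep landing in $L$: but $L$ is a clique with all external neighbors in $H$, so activating two vertices of $L$ with a common neighbor in $H$ immediately puts a vertex of $H$ active, and a second one follows because that $H$-vertex has high degree. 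So the main obstacle — and the part I expect to require the most care — is the bookkeeping that guarantees, from \emph{any} valid starting pair, that at least two vertices of $H$ eventually become active \emph{before} the process can stall, ruling out the lone bad configuration $DC_n$ (treated directly) and handling the boundary cases where $|L|$ is large or degrees are as small as the Ore bound allows. Once that is in hand, the three stages — seed $\to$ two active in $H$ $\to$ all of $H$ $\to$ all of $L$ — chain together to give $\langle\{u,v\}\rangle = V$, proving $m(G,2)\leq 2$, and since $G$ has at least two vertices and cannot be a single edge in any nontrivial case, $m(G,2)=2$.
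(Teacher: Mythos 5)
Your plan identifies some of the right structural features (the set $H$ of vertices of degree at least $n/2$, the fact that $L=V\setminus H$ is a clique, the $DC_n$ obstruction), but it has genuine gaps at each of its three stages, and the stages themselves do not match the route the paper takes.

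The most concrete problem is the final step, absorbing $L$. You claim each $w\in L$ has $\deg(w)-|L|+1\geq 2$ neighbors in $H$. The Ore condition forces $L$ to be a clique, which only gives $|L|-1\leq\deg(w)$, i.e.\ $\deg(w)-|L|+1\geq 0$; the bound $\deg(w)\geq|L|+1$ is not established and there is no reason supplied for it. In particular nothing in your argument rules out a vertex $w\in L$ with $N(w)\subseteq L$, which would have \emph{zero} neighbors in $H$ and could never be infected by $H$ alone. (Your parenthetical ``if $\deg(w)=n-2$'' also cannot occur, since $w\in L$ means $\deg(w)<n/2$.) The middle stage is equally unproven: you assert that two active vertices of $H$ activate all of $H$ ``as in Lemma~\ref{lem:three}/Lemma~\ref{lemma_Dirac_almost_there}'', but those lemmas count edges between the \emph{full} active set and the \emph{full} inactive set under a minimum-degree hypothesis on $G$; the subgraph induced on $H$ has no such degree guarantee (an $H$-vertex may have most of its neighbors in $L$), and active vertices may land in $L$. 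Your common-neighbor observation also only applies to \emph{non-adjacent} pairs in $H$; for an adjacent pair $u,v\in H$, inclusion–exclusion gives only $|N(u)\cap N(v)|\geq\deg(u)+\deg(v)-n\geq 0$. Finally, for the initial bootstrap you yourself flag ``the part I expect to require the most care'' and leave it open; this is exactly where the paper does nontrivial work. The paper's proof instead proves an existence lemma by edge-counting (Lemma~\ref{lemma_Ore_suitable_neighbor}) producing two degree-$\geq n/2$ vertices with a common degree-$\geq n/2$ neighbor, then shows (Lemma~\ref{lemma_Ore_graphs_spread_of_three_arb_nodes}) that these three seeds infect at least $n/2$ vertices \emph{without} trying to fill $H$ first, and then (Lemma~\ref{lemma_Ore_graph_small_contagious_set_almost_there}) that a closed active set of size $\geq n/2$ that is not all of $V$ forces the graph to be $DC_n$. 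So the paper's endgame reasons about the full eventually-infected set and its complement, not about $H$ versus $L$; your decomposition is not wrong as a picture of the graph, but it is not the right object to run the counting on, and as written the proposal does not close.
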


\section{Conclusion}
We have examined conditions on the minimum degree and the average degree of undirected graphs ensuring the existence of contagious sets of size~$k \geq 2$.
Moreover, we have studied the speed with which the infection spreads through the graph.
Our bounds on the number of rounds that any contagious set of size~$k$ requires to activate the whole graphs are tight.

There are several questions that arise from this work. One is to determine the value of $M(n,k,r)$ for all $n \geq k \geq r$.
%
%
Finally, it might be of interest to discover additional graph properties implying $m(G,k)=k$.

\paragraph*{Acknowledgments.}
The authors would like to thank the reviewers of the conference version of this paper for their valuable comments that helped improve the presentation significantly.



\newpage

\appendix

\section{The Proof of Theorem~\ref{thm:Ore}}
\label{section_proof_ore_graphs}

In this section we give the proof of Theorem~\ref{thm:Ore}. For the convenience of the reader, we restate the theorem.

\medskip

\noindent\textbf{Theorem 15, restated.}
\textit{
	Every Ore graph $G=(V,E)$ has a contagious set of size two.
}

\medskip

%
%
\begin{proof}
	For Dirac graphs that do not coincide with~$DC_n$ any three nodes form a contagious set, but Example~\ref{example_ore_graphs} in Sect.~\ref{sec:ore} demonstrates that this statement is not valid for Ore graphs.
	However, activating three arbitrarily selected nodes with degree~$\frac{n}{2}$ each will infect at least half of the nodes, as we show in Lemma~\ref{lemma_Ore_graphs_spread_of_three_arb_nodes}.
	Such an active set of size three can be obtained by activating two nodes only:
	according to Lemma~\ref{lemma_Ore_suitable_neighbor} there are two nodes~$u,v$ with degree at least~$\frac{n}{2}$, such that both are adjacent to a third node~$w$ of degree at least~$\frac{n}{2}$ as well. Then activating~$u$ and~$v$ will infect~$w$ and subsequently at least half of the nodes.
	
	Thereafter, the infection will reach all nodes unless the graph is isomorphic to $DC_{n}$. This is proven in Lemma~\ref{lemma_Ore_graph_small_contagious_set_almost_there}.
	On the other hand, if the graph is isomorphic to~$DC_n$, then we have already argued in the proof of Theorem~\ref{theorem_generalization_mindegree_arbitrary_thresholds} that $m(G,2) \leq 2$.
	\begin{lemma}
		\label{lemma_Ore_suitable_neighbor}
		In an Ore graph there exists a vertex~$w$ of degree at least $\frac{n}{2}$ that is adjacent to at least two vertices $u,v$ with $\mathrm{deg}(u),\mathrm{deg}(v) \geq n/2$.
	\end{lemma}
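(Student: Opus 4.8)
The plan is to partition the vertex set according to degree and exploit the Ore condition on non-edges. Let $H=\{x\in V: \deg(x)\geq n/2\}$ be the set of ``high-degree'' vertices and $L=V\setminus H$ the ``low-degree'' vertices, so every vertex in $L$ has degree at most $\lceil n/2\rceil-1$. The first observation is that $L$ induces a clique: if $x,y\in L$ were non-adjacent, then $\deg(x)+\deg(y)\leq 2(\lceil n/2\rceil-1)<n$ for the relevant parities, contradicting the Ore condition (one has to check the parity of $n$ carefully, but low-degree means degree strictly below $n/2$, so two such degrees cannot sum to $n$). Consequently $|L|\leq \lceil n/2\rceil$, since a clique on $|L|$ vertices forces each of its vertices to have degree at least $|L|-1$, and $|L|-1<n/2$ gives the bound; in particular $|H|\geq \lfloor n/2\rfloor\geq 1$, so $H$ is nonempty.

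Next I would locate the desired vertex $w$. Take any $w\in H$; it has $\deg(w)\geq n/2$ neighbors. The concern is that too many of $w$'s neighbors could lie in $L$. But since $L$ is a clique of size at most $\lceil n/2\rceil$, and each vertex of $L$ has degree strictly less than $n/2$, a counting argument bounds how many high-degree vertices $w$ can avoid among its neighbors. Concretely, $w$ has at least $n/2$ neighbors and at most $|L|$ of them lie in $L$; if $|L|\leq n/2-1$ we are already done because then $w$ has at least one neighbor in $H$, but we need \emph{two}, so I would instead argue as follows. Pick $w\in H$ to be a vertex of \emph{maximum} degree in $H$ (equivalently in $G$). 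If fewer than two neighbors of $w$ lie in $H$, then all but at most one of $w$'s at-least-$n/2$ neighbors lie in the clique $L$, forcing $|L|\geq n/2-1$ and hence (by the clique bound) $|L|\in\{\lceil n/2\rceil-1,\lceil n/2\rceil\}$ with $L$ nearly spanning. In that near-extremal regime $H$ is tiny (one or two vertices), and one checks directly: if $|H|=1$ then the unique vertex of $H$ has degree $n-1$ but then every vertex of $L$ adjacent to it would have degree at least $|L|$, and comparing with the Ore condition on the non-edges inside... actually the cleanest finish is to observe that when $|H|$ is very small, $w$ itself has degree close to $n-1$, so $w$ is adjacent to all of $L$ and to every other vertex of $H$; then since a vertex of $L$ adjacent to $w$ has positive degree, it has a neighbor, and tracing degrees shows two members of $H$ adjacent to a common high-degree vertex can be found directly.

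I expect the main obstacle to be the boundary cases where $|H|$ is $1$ or $2$ and $n$ is small or odd: there the averaging argument is too weak and one must argue by hand using the exact Ore inequality on specific non-edges. The bulk of the lemma (the clique structure of $L$ and the resulting bound $|L|\leq\lceil n/2\rceil$) is routine; the delicate point is ensuring \emph{two} high-degree neighbors of a common high-degree $w$ rather than just one, which is exactly what breaks in the near-$DC_n$ configurations, and those are handled separately in the subsequent lemmas anyway, so it may suffice here to treat $DC_n$ as an excluded case or to invoke that $w$ of maximum degree has at least $\lfloor n/2\rfloor$ neighbors, of which at most $|L|-1<n/2-1$ can be in $L\setminus\{$one vertex$\}$ once the extremal count is pushed through.
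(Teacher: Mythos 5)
Your setup matches the paper's: you partition into high-degree vertices $H$ (the paper's $S$) and low-degree vertices $L$ (the paper's $\overline{S}$), observe that $L$ induces a clique via the Ore condition, and conclude $|L|\leq\lceil n/2\rceil$, hence $|H|\geq\lfloor n/2\rfloor$. That part is correct and essentially the paper's first half.

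The endgame, however, contains a genuine error. You argue: pick $w\in H$ of maximum degree; if at most one neighbor of $w$ lies in $H$, then $w$ has at least $n/2-1$ neighbors in the clique $L$, forcing $|L|\geq n/2-1$; and then you claim that ``in that near-extremal regime $H$ is tiny (one or two vertices).'' This inference is wrong. Combining $|L|\geq n/2-1$ with the earlier bound $|L|\leq\lceil n/2\rceil$ only pins $|L|$ to roughly $n/2$, so $|H|=n-|L|$ is also roughly $n/2$, not one or two vertices. In fact this ``bad'' regime ($|H|\approx|L|\approx n/2$, $L$ a clique, every $H$-vertex with at most one neighbor in $H$) is exactly the hard case the lemma must handle; a local argument about a single $w$ does not rule it out, and you acknowledge yourself that you do not have a clean finish.

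The paper avoids this by a \emph{global} edge count rather than a single-vertex argument: assuming (for contradiction) that no vertex of $S$ has two neighbors in $S$, each $s\in S$ contributes at least $\lceil n/2\rceil-1$ edges to $\overline{S}$, so $|E(S,\overline{S})|\geq(\lceil n/2\rceil-1)\lfloor n/2\rfloor$; meanwhile $\sum_{v\in\overline{S}}\deg(v)\leq|\overline{S}|(\lceil n/2\rceil-1)$, and subtracting shows $\overline{S}$ can carry at most $\lceil n/2\rceil-1$ (or $0$) internal edges. But $\overline{S}$ is a clique, so it has $\binom{|\overline{S}|}{2}$ internal edges, a contradiction for $n>4$; the cases $n\in\{3,4\}$ are dispatched via Hamiltonicity of Ore graphs. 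If you want to salvage your route, replace the single-vertex step by summing the ``$\geq\lceil n/2\rceil-1$ neighbors in $L$'' bound over \emph{all} of $H$ and comparing with the total degree available inside $L$; that recovers the paper's counting argument.
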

	%
	%
	\begin{proof}
		Let~$S$ be the set of vertices with degree at least $\left\lceil\frac{n}{2}\right\rceil$. We want to show that there exists a vertex in $S$ with two neighbors in $S$. 
		
		First we show that~$S$ must have size at least $\left\lfloor\frac{n}{2}\right\rfloor$: if there is a vertex $x \notin S$, then $x$ has at most $\left\lceil\frac{n}{2}\right\rceil -1$ neighbors, denoted by~$N(x)$. All vertices that do not belong to~$x \cup N(x)$ must belong to $S$ (in order to satisfy the degree constraint for non-adjacent nodes); note that there are at least $\left\lfloor\frac{n}{2}\right\rfloor$ such nodes outside~$\{x\} \cup N(x)$.
		
		If there is no vertex in $S$ with two neighbors in~$S$, then $E(S,\overline{S})\geq (\left\lceil\frac{n}{2}\right\rceil -1)\cdot \left\lfloor\frac{n}{2}\right\rfloor$ as $|S| \geq \left\lfloor\frac{n}{2}\right\rfloor$ and every vertex in $S$ has at least $\left\lceil\frac{n}{2}\right\rceil -1$ neighbors outside $S$.
		Observe that $$\sum_{v\in \overline{S}}\mathrm{deg}(v)\leq |\overline{S}| \cdot \left(\left\lceil\frac{n}{2}\right\rceil -1\right).$$
		Thus, $|E(\overline{S})|$ is bounded above by the difference of this product on the RHS and the lower bound on~$|E(S,\overline{S})|$:
		\begin{equation}
			\label{Eq_lemma_ore_upper_bound}
			|\overline{S}| \cdot \left(\left\lceil\frac{n}{2}\right\rceil -1\right) -  \left(\left\lceil\frac{n}{2}\right\rceil -1\right)\cdot \left\lfloor\frac{n}{2}\right\rfloor = \left(\left\lceil\frac{n}{2}\right\rceil -1\right) \cdot \left(|\overline{S}| - \left\lfloor\frac{n}{2}\right\rfloor\right).
		\end{equation}
		Recall that we showed~$|\overline{S}|\leq \left\lceil\frac{n}{2}\right\rceil$.
		Thus, the bound given in Eq.~(\ref{Eq_lemma_ore_upper_bound}) is nonnegative only if $|\overline{S}| \in \left\{ \left\lfloor\frac{n}{2}\right\rfloor, \left\lceil\frac{n}{2}\right\rceil \right\}$, 
		%
		%
		and hence the upper bound equals $\left\lceil\frac{n}{2}\right\rceil-1$ or~$0$. But~$\overline{S}$ has to be a clique by choice of~$S$ and the degree requirement of Ore graphs. Therefore, the number of edges inside~$\overline{S}$ must be~$\binom{|\overline{S}|}{2}=\binom{\left\lceil\frac{n}{2}\right\rceil}{2}$, which contradicts the upper bound of $\left\lceil\frac{n}{2}\right\rceil-1$ or $0$ on the number of edges inside~$\overline{S}$ if~$n>4$ holds. 
		%
		%
		For~$n \in \{3,4\}$ we recall that every Ore graph has a Hamiltonian cycle~\cite{ore}; observe that the statement of the lemma follows immediately in this case.
		%
		%
	\end{proof}
	Thus, once we activate~$u,v$, the node~$w$ will become infected and then eventually half of the nodes.
	\begin{lemma}
		\label{lemma_Ore_graphs_spread_of_three_arb_nodes}
		The activation of three vertices with degrees at least $\frac{n}{2}$ each will infect at least half of the nodes in an Ore graph.
	\end{lemma}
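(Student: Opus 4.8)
The plan is to argue by contradiction. Write $a,b,c$ for the three seeds, each of degree at least $n/2$, put $T=\langle\{a,b,c\}\rangle$, and suppose $|T|<n/2$. Since the process halts at $T$, every vertex $v\in\overline T$ has $|N(v)\cap T|\le 1$, for otherwise $v$ would have been infected. Two consequences will be used repeatedly: every $v\in\overline T$ has $\deg(v)\le 1+(|\overline T|-1)=n-|T|$; and the sets $N(a)\cap\overline T$, $N(b)\cap\overline T$, $N(c)\cap\overline T$ are pairwise disjoint, since a vertex of $\overline T$ lying in two of them would have two neighbours in $T$.

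Next I would estimate the cut $(T,\overline T)$ from both sides. The number of edges across it is at most $|\overline T|=n-|T|$. On the other hand each seed $u$ contributes $|N(u)\cap\overline T|=\deg(u)-|N(u)\cap T|\ge n/2-(|T|-1)$ edges to the cut, and by the disjointness above the three seeds together contribute at least $3\bigl(n/2-|T|+1\bigr)$. If this exceeds $n-|T|$ we already have a contradiction, so we may assume it does not, i.e.\ $|T|\ge n/4+3/2$; in particular $|T|\ge 4$ unless $n\le 6$, and for $n\le 6$ the lemma is immediate since the three seeds already number at least $n/2$. Subtracting the seed contribution from the cut bound shows that the remaining vertices $T_0:=T\setminus\{a,b,c\}$ satisfy $\sum_{u\in T_0}|N(u)\cap\overline T|\le 2|T|-n/2-3$.

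The crux is that the Ore condition makes this impossible. First, no $u\in T_0$ can be adjacent to all of $\overline T$: that alone would make the last sum at least $|\overline T|=n-|T|$, forcing $n-|T|\le 2|T|-n/2-3$ and hence $|T|\ge n/2+1$, a contradiction. So each $u\in T_0$ has a non-neighbour $v\in\overline T$, and then $\deg(u)\ge n-\deg(v)\ge n-(n-|T|)=|T|$; since $u$ has at most $|T|-1$ neighbours inside $T$, it must have at least one neighbour in $\overline T$. Summing over $T_0$ gives $\sum_{u\in T_0}|N(u)\cap\overline T|\ge |T_0|=|T|-3$, and combining with the upper bound $2|T|-n/2-3$ yields $|T|\ge n/2$, contradicting $|T|<n/2$. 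Hence $|T|\ge n/2$ and the lemma holds.

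I expect the main difficulty to be seeing where the contradiction must come from: the three seeds on their own only force $|T|$ to be roughly $n/4$, which is far from enough, so one has to exploit the \emph{non-seed} vertices of $T$ — they are squeezed between sending essentially no edges into $\overline T$ (by the cut count) and sending at least one each (by the Ore condition, because they cannot dominate the large set $\overline T$, and any non-neighbour there has degree at most $n-|T|$). The remaining work is routine: checking the corner cases $n\le 6$ and $|T|=3$, and carrying the ceiling $\lceil n/2\rceil$ through the degree inequalities, both of which are harmless.
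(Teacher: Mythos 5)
Your proof is correct and follows essentially the same route as the paper's: both arguments double-count the cut between the active set and its complement, using the Ore condition (together with the bound $\deg(v)\le n-|T|$ for $v\in\overline T$) to force every non-seed active vertex to contribute at least one cut edge, and the degree hypothesis to force each seed to contribute at least $n/2-|T|+1$, and both arrive at the inequality $|\overline T|\ge 3(n/2-|T|+1)+(|T|-3)$ which yields $|T|\ge n/2$. The intermediate bound $|T|\ge n/4+3/2$ and the explicit disjointness remark are just more detailed bookkeeping of the same computation.
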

	\begin{proof}
		Let $A_0$ consist of three vertices of degree at least $\frac{n}{2}$. Let $A:=\langle A_0 \rangle$, i.e.\ the set of nodes that will eventually be active if we activate~$A_0$. Observe that~$A_0 \subseteq A$ holds by definition.
		
		Assume for the sake of contradiction that~$|A|<\frac{n}{2}$ and recall that~$\overline{A} := V \setminus A$ is the set of nodes that do not become active.
		We claim that each of the vertices in $A \setminus A_0$ must have at least one neighbor in~$\overline{A}$: vertices in~$\overline{A}$ have at most one neighbor in $A$ and thus degree at most~$|\overline{A}|$ each. If $a \in A$ and $b \in \overline{A}$ are non-adjacent, we have that $\mathrm{deg}(a)+\mathrm{deg}(b) \leq |\overline{A}|+|A|-1+|N(a)\cap \overline{A}|$.
		As this quantity has to be at least~$n$ in an Ore graph and~$|A| + |\overline{A}| = n$ holds, $N(a)\cap \overline{A}$ must be non-empty.
		
		Each of the nodes in~$A_0$ has degree at least~$\frac{n}{2}$ by assumption of the lemma, and hence each of them has at least $(\frac{n}{2}-(|A|-1))$ neighbors in $\overline{A}$. Recall that the other~$|A| - 3$ vertices in~$A$ must have at least one neighbor in $\overline{A}$ each. But since each node in~$\overline{A}$ can have at most one neighbor in $A$, otherwise it would be infected, we get
		\begin{align*}
			|\overline{A}| &\geq 3 \cdot \left(\frac{n}{2}-(|A|-1)\right)+(|A|-3)\\
			& =3 \cdot \frac{n}{2}-3\cdot |A|+3+|A|-3\\
			& =\frac{n}{2}+n-2\cdot |A|.
		\end{align*}
		Thus, we have
		$|\overline{A}|+2 \cdot |A| =n+|A|\geq n+\frac{n}{2}$ and the desired contradiction $|A|\geq\frac{n}{2}$ follows.
		%
	\end{proof}
	Next, we show
	\begin{lemma}
		\label{lemma_Ore_graph_small_contagious_set_almost_there}
		Consider an $n$-vertex Ore graph that is not equal to~$DC_{n}$. Then any set of three vertices with degree at least~$\frac{n}{2}$ each is a contagious set.
	\end{lemma}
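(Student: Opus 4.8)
The plan is to argue by contradiction: suppose some set $A_0$ of three vertices, each of degree at least $\frac n2$, is \emph{not} contagious, and deduce that $G$ must be $DC_n$, contrary to hypothesis. Write $A := \langle A_0 \rangle$ and $\overline A := V \setminus A$. Since $A_0$ is not contagious, $\overline A \neq \emptyset$; by Lemma~\ref{lemma_Ore_graphs_spread_of_three_arb_nodes} we already have $|A| \geq \frac n2$, hence $3 = |A_0| \leq |A|$ and $|\overline A| \leq \frac n2 \leq |A|$. The two facts I will use throughout are: (i) because $A$ is the closure of $A_0$, every vertex of $\overline A$ has at most one neighbour in $A$; and (ii) the Ore condition $\mathrm{deg}(u) + \mathrm{deg}(v) \geq n$ for every non-adjacent pair, which I apply mainly to pairs $(u,v)$ with $u \in A$, $v \in \overline A$, using the crude bound $\mathrm{deg}(v) \leq |\overline A|$ that follows from (i).

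First I would rule out that some $u_0 \in A$ is adjacent to all of $\overline A$. If so, then by (i) every vertex of $\overline A$ has $u_0$ as its unique neighbour in $A$, so all cut edges are incident to $u_0$; since $|A| \geq 3$, there is some other $u \in A$ with no neighbour in $\overline A$, hence non-adjacent to every $v \in \overline A$. Then (ii) gives $\mathrm{deg}(u) \geq n - \mathrm{deg}(v) \geq n - |\overline A| = |A|$, while $N(u) \subseteq A \setminus \{u\}$ forces $\mathrm{deg}(u) \leq |A| - 1$, a contradiction. Hence from now on no vertex of $A$ is adjacent to all of $\overline A$; equivalently, every $u \in A$ has a non-neighbour in $\overline A$.

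The heart of the argument is a double count of the cut $E(A, \overline A)$. For each $u \in A$, applying (ii) to $u$ and one of its non-neighbours in $\overline A$ gives $\mathrm{deg}(u) \geq n - |\overline A| = |A|$, and since $u$ has at most $|A| - 1$ neighbours inside $A$ it must have at least one in $\overline A$; thus $E(A, \overline A) \geq |A|$. On the other hand (i) gives $E(A, \overline A) \leq |\overline A|$. Together with $|\overline A| \leq |A|$, this forces $|A| = |\overline A| = \frac n2$ (so $n$ is even) and $E(A, \overline A)$ a perfect matching between $A$ and $\overline A$. Now re-run the degree estimate with the exact count ``one neighbour in $\overline A$'': each $u \in A$ satisfies $|N(u) \cap A| = \mathrm{deg}(u) - 1 \geq \frac n2 - 1 = |A| - 1$, so $A$ is a clique. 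Symmetrically --- here using that $|A| \geq 2$, so each $v \in \overline A$ has a non-neighbour in $A$, and that every $u \in A$ now has $\mathrm{deg}(u) \leq \frac n2$ --- one gets $|N(v) \cap \overline A| \geq |\overline A| - 1$ for each $v$, so $\overline A$ is a clique. Thus $G$ is two $\frac n2$-cliques joined by a perfect matching, i.e.\ $G = DC_n$: contradiction.

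It remains to handle the few small cases (say $n \leq 4$) where inequalities such as $|A| \geq 2$ are vacuous or where $DC_n$ is degenerate; these can be checked by hand, or dispatched at once by recalling --- as in Lemma~\ref{lemma_Ore_suitable_neighbor} --- that every Ore graph is Hamiltonian. The step I expect to be the main obstacle is making the two symmetric ``clique'' deductions precise: one has to invoke the Ore inequality only for genuinely non-adjacent pairs, and track the degree upper bounds ($\mathrm{deg}(u) \leq |A| - 1 + |N(u) \cap \overline A|$ on the $A$ side, and its analogue on $\overline A$) exactly enough --- once the cut is known to be a perfect matching --- that the graph is forced to be \emph{exactly} $DC_n$ and not merely a near-miss.
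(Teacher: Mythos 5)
Your proof is correct and follows essentially the same route as the paper: bound $\deg(b)\le|\overline A|$ for $b\in\overline A$, use the Ore condition to force each $a\in A$ to send at least one edge into $\overline A$, double-count the cut to conclude it is a perfect matching with $|A|=|\overline A|=n/2$, and then show both sides are cliques so that $G=DC_n$. The only differences are cosmetic (you explicitly rule out a vertex adjacent to all of $\overline A$, whereas the paper absorbs that case directly, and your clique step uses the degree \emph{lower} bound $\deg(u)\ge n/2$ where the paper argues by contradiction from the upper bound $\deg(u)\le n/2$), and your worry about small $n$ is moot since $|A|\ge 3$ together with $|A|=|\overline A|=n/2$ already forces $n\ge 6$.
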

	\begin{proof}
		Pick any three vertices with degree at least~$\frac{n}{2}$ as seed and let~$A$ be the set of eventually infected vertices. By Lemma~\ref{lemma_Ore_graphs_spread_of_three_arb_nodes} we have $|\overline{A}|\leq |A|$.
		Again every~$b \in \overline{A}$ is adjacent to at most one node in~$A$, otherwise~$b$ would be infected, and hence we have~$\mathrm{deg}(b)\leq |\overline{A}|$.
		Then every node in~$A$ that is non-adjacent to some $b\in \overline{A}$ must have degree at least~$n - |\overline{A}|$ to meet the degree requirement of Ore graphs.
		
		It follows that every vertex $a \in A$ must have at least one neighbor in $\overline{A}$: if~$a$ is adjacent to all vertices in~$\overline{A}$, the claim holds. If~$a$ is non-adjacent to at least one, then we have already shown that~$a$ has degree at least~$n - |\overline{A}| = |A|$. Since node~$a$ can have only~$|A|-1$ neighbors in~$A$, $a$ must have at least one within~$\overline{A}$.

		No vertex in $A$ can have more than one neighbor in $\overline{A}$, since this would imply the existence of a vertex in~$\overline{A}$ with two active neighbors, as $|\overline{A}|\leq|A|$; but this would contradict the choice of~$\overline{A}$.
		Thus, each vertex in $A$ has exactly one neighbor in $\overline{A}$ and we have that $|A| {=} |\overline{A}| {=} n/2$. Notice that~$A$ and~$\overline{A}$ must both be cliques as otherwise two non-adjacent vertices in~$A$ (resp., $\overline{A}$) would have degree less than~$\frac{n}{2}$ each and thus their degrees add up to less than~$n$ contradicting the property of an Ore graph.
		But then the graph is isomorphic to~$DC_{n}$.
		%
	\end{proof}
	\noindent{}This concludes the proof of Theorem~\ref{thm:Ore}.
	%
\end{proof}

\end{document}